%% file: main.tex
\documentclass[lettersize,onecolumn]{IEEEtran}
\usepackage{amsmath,amsfonts}
\allowdisplaybreaks
\usepackage{array}
\usepackage[caption=false,font=normalsize,labelfont=sf,textfont=sf]{subfig}
\usepackage{textcomp}
\usepackage{stfloats}
\usepackage{url}
\usepackage{verbatim}
\usepackage{graphicx}
\usepackage{cite}
\input{color_package}
\usepackage{pgfplots}
\pgfplotsset{compat=1.18}
\usepackage{bm}
\usepackage{enumitem}
\usepackage{bbm}
\usepackage[ruled, linesnumbered, vlined]{algorithm2e}
\usepackage{booktabs} 
\usetikzlibrary{spy,backgrounds}
\usepackage{placeins}
\usepackage{amssymb}
\usepackage{mathtools}
\usepackage{amsthm}
\usepackage[normalem]{ulem}
\usepackage{float}
\usepackage[acronym]{glossaries}
\usepackage{physics}

\newcounter{mytempeqncnt}

\DeclareMathOperator*{\argmax}{arg\,max}

\newacronym{rm}{RM}{Reed--Muller}
\newacronym{ml}{ML}{maximum-likelihood}
\newacronym{fht}{FHT}{fast Hadamard transform}
\newacronym{rpa}{RPA}{recursive projection-aggregation}
\newacronym{cpa}{CPA}{collapsed projection-aggregation}
\newacronym{pcpa}{PCPA}{pruned collapsed projection-aggregation}
\newacronym{pa}{PA}{projection-aggregation}
\newacronym{scl}{SCL}{successive cancellation list}
\newacronym{bp}{BP}{belief propagation}
\newacronym{ldpc}{LDPC}{low-density parity-check}
\newacronym{llr}{LLR}{log-likelihood ratio}
\newacronym{fer}{FER}{frame error rate}
\newacronym{awgn}{AWGN}{additive white Gaussian noise}
\newacronym{biawgn}{BIAWGN}{binary-input additive white Gaussian noise}
\newacronym{bpsk}{BPSK}{binary phase shift keying}
\newacronym{bsc}{BSC}{binary symmetric channel}
\newacronym{iid}{i.i.d}{independent and identically distributed}
\newacronym{bms}{BMS}{binary-input memoryless symmetric}

\newtheorem{theorem}{Theorem}

\newtheorem{lemma}[theorem]{Lemma}
\newtheorem{prop}{Proposition}
\newtheorem{definition}{Definition}

\newenvironment{customlegend}[1][]{%
    \begingroup
    \csname pgfplots@init@cleared@structures\endcsname
    \pgfplotsset{#1}%
}{%
    \csname pgfplots@createlegend\endcsname
    \endgroup
}%
\def\addlegendimage{\csname pgfplots@addlegendimage\endcsname}

\pgfmathdeclarefunction{gauss}{2}{%
  \pgfmathparse{1/(#2*sqrt(2*pi))*exp(-((x-#1)^2)/(2*#2^2))}%
}

\begin{document}
\title{Density Evolution of Soft-Decision Collapsed Projection-Aggregation Decoding for Reed--Muller Codes over the BIAWGN Channel}

\author{
\IEEEauthorblockN{Jiajie Li, Marvin Rübenacke, and Warren J. Gross}\\
\thanks{
Jiajie Li, Marvin Rübenacke, and Warren J. Gross are with the Department of Electrical and Computer Engineering, McGill University, Montr{\'e}al, Qu{\'e}bec, Canada.
(e-mail: jiajie.li@mail.mcgill.ca; marvin.ruebenacke@mcgill.ca; warren.gross@mcgill.ca).
}
}




\maketitle

\begin{abstract}
\gls{rm} codes have been shown to achieve capacity over a range of channels, and recently proposed \gls{pa} decoding has been experimentally shown to achieve near-maximum-likelihood decoding performance.
These recent achievements motivate theoretical research on \gls{pa} decoding.
In this work, we analyze the density function of the soft output from \gls{cpa} decoding for \gls{rm} codes over the \gls{biawgn} channel.
We prove that soft-decision \gls{cpa} decoding returns an exact marginal probability and is symmetric. 
Based on the analysis, we build a density evolution model for \gls{cpa} decoding.
To simplify the density evolution, we approximate the projection and the fast Hadamard transform decoding using hard-decision decoding.
Simulation results over the \gls{biawgn} channel show that our proposed density evolution model captures the fast reduction in the mean and the variance of the soft information returned from the CPA decoding, which qualitatively explains the decoding mechanism and the fast convergence speed of the CPA decoding.
We perform an asymptotic analysis based on the proposed density evolution, and we show that \gls{cpa} decoding can achieve a vanishing error probability for \gls{rm} codes with a vanishing code rate.
\end{abstract}

\begin{IEEEkeywords}
asymptotic analysis, collapsed projection-aggregation decoding, density evolution, Reed-Muller codes, soft-decision decoding.
\end{IEEEkeywords}

\glsresetall 

\section{Introduction}
Recent research demonstrates that \gls{rm} codes~\cite{RMcode} achieve channel capacity under a wide range of channels, such as the binary erasure channel~\cite{RMBEC}, the \gls{bsc}~\cite{RMBSC, RMBSCimprove}, and the \gls{bms} channel~\cite{RMBMS_bit_journal,RMBMS_block,abbe2024polynomial,pfister2025capacity}.
In addition to the capacity-achieving capability, the \gls{rm} codes are structurally similar to polar codes~\cite{rmpolar} and also exhibit a polarization effect~\cite{rmpolarize}.

Decoding with affordable complexity for \gls{rm} codes is the key to utilizing the desired characteristic mentioned.
The first decoding algorithm for \gls{rm} codes is majority-vote decoding~\cite{reed} that can correct error patterns with a weight of fewer than half of the minimum distance of \gls{rm} codes.
For order $r=1$ \gls{rm} codes, \gls{ml} decoding performance can be achieved under a complexity $O\left(n\log_{2}\left(n\right)\right)$ using the \gls{fht} decoding~\cite{FHT,FHT_soft}, where $n$ is the code length.
Many decoding algorithms are proposed for \gls{rm} codes with $r\geq2$.
For example, Dumer's recursive list decoding can achieve \gls{ml} decoding performance given a sufficiently large list size~\cite{DumerList}.

The recently proposed \gls{rpa} decoding and its list decoding are observed to achieve near-\gls{ml} decoding performance for a range of code lengths and code rates~\cite{RPA}.
Given its near-\gls{ml} decoding performance, theoretical analysis on \gls{rpa} decoding is conducted in the literature~\cite{rameshwar2024upper,fathollahi2026error,zhang2025errorpatternpa}.
It is proven in~\cite{rameshwar2024upper} that the \gls{rpa} decoding can asymptotically achieve vanishing probability over the \gls{bsc} for \gls{rm} codes with $r\leq\log\left(cm\right)$, where $m$ is the code length parameter and $c$ is a constant that is proportional to the cross-over probability of the \gls{bsc}.
Later, the result is extended to the \gls{bms} channel~\cite{fathollahi2026error}.

A variant of \gls{rpa} decoding, namely \gls{cpa} decoding, is proposed to reduce the computational complexity of \gls{rpa} decoding by removing repeated subspaces~\cite{RPA_BP}.
\gls{cpa} decoding targets the soft-decision variant of \gls{rpa} decoding and, as pointed out in~\cite{RPA_BP}, it is closely related to \gls{bp} decoding.
The similarity of the decoding results under different subspaces in \gls{cpa} decoding is first analyzed in~\cite{Pruning}.
The error patterns encountered during \gls{rpa} and \gls{cpa} decoding are analyzed in~\cite{zhang2025errorpatternpa}, where it is further shown that \gls{rpa} and \gls{cpa} decoding with $2t+1$ subspaces can correct up to $t$ errors.
Moreover, a recent work shows that the \gls{pa} decoding (\gls{rpa} decoding and its variants) can decode error patterns with a weight of fewer than half the minimum distance efficiently~\cite{zhang2025minimum}.
Also, \gls{cpa} decoding is further simplified in~\cite{li2024layeredCPA}, where the extrinsic inter-iteration update is replaced by the broadcast inter-iteration update.

Theoretical research~\cite{rameshwar2024upper,fathollahi2026error,zhang2025errorpatternpa,zhang2025minimum} on \gls{rpa} and \gls{cpa} decoding focuses on hard-decision decoding, where the number of errors encountered by decoding algorithms is investigated, and analysis tools for soft-decision decoding, which handle the probability density function of received soft information, are still missing in the literature.
In this work, we analyze soft-decision \gls{cpa} decoding, propose a density evolution model to track changes in the density function of soft-decision outputs from \gls{cpa} decoding, and perform an asymptotic analysis using the proposed density evolution model.
The following contributions on soft-decision \gls{cpa} decoding are made:
\begin{enumerate}
    \item We found that \gls{cpa} decoding returns the exact marginal probability. Also, we prove that \gls{cpa} decoding is symmetric with respect to the transmitted codeword.
    \item We propose density evolution to analyze the density function of the soft-decision output from each iteration of \gls{cpa} decoding over the \gls{biawgn} channel.
    Proposed density evolution captures the fast reduction in the mean and variance of the soft information returned from \gls{cpa} decoding, which qualitatively explains the decoding mechanism and the fast convergence of \gls{cpa} decoding.
    \item Based on the proposed density evolution, we perform an asymptotic analysis and find that the soft-decision \gls{cpa} decoding achieves vanishing error probability when the \gls{rm} code has a vanishing code rate.
\end{enumerate}

This work is structured as follows.
Section~\ref{sec:preliminaries} shows the necessary background of \gls{cpa} decoding.
Section~\ref{sec:keyprop_CPA} presents key properties that facilitate the condition for applying the density evolution.
Section~\ref{sec:DE_CPA} presents the density evolution model for the~\gls{cpa} decoding.
Section~\ref{sec:Asym_Ana_CPA} is the asymptotic analysis of \gls{cpa} decoding based on our proposed density evolution model.
Section~\ref{sec:conclusion} concludes this work.

\section{Preliminaries}
\label{sec:preliminaries}
\subsection{Notations}
Matrices and vectors are denoted as bold upper-case letters ($\bm{M}$) and bold lower-case letters ($\bm{v}$), respectively. 
The transpose operation is $^\top$, and a projection based on the coset of a subspace $\mathbbm{B}_{i}$ is denoted by the subscript $/\mathbbm{B}_{i}$. 
Binary indices are denoted by the letter $z$.
The probability of an event is denoted by $\mathbbm{P}\left(\cdot\right)$.
The probability density function is denoted by $\mathrm{p}\left(\cdot\right)$.
\subsection{RM Codes}
\gls{rm}$\left(m,r\right)$ codes are $\left(n,k\right)$ linear codes, where $n=2^{m}$ is the code length, $k=\sum_{i=0}^{i=r}\binom{m}{i}$ is the code dimension, $0 \leq r \leq m$, and the code rate equals $R=\tfrac{k}{n}$.
The generator matrix for encoding the \gls{rm}$\left(m,r\right)$ codes can be constructed in the following two steps:
\begin{enumerate}
    \item The generator matrix for the RM$(m,m)$ code ($\bm{G}(m,m)$) is obtained by applying the $m$-th Kronecker power of the base matrix $\bm{F}$ \cite{rmpolar}:
    \begin{equation}
        \bm{G}(m,m)=\bm{F}^{\otimes m}, \text{ }\bm{F}=
        \begin{bmatrix}
        1&0\\
        1&1
        \end{bmatrix}\text{.}
    \end{equation}
    \item Select rows with the $k$ largest Hamming weights, which are at least $2^{m-r}$, in $\bm{G}\left(m,m\right)$ to compose the generator matrix $\bm{G}\left(m,r\right)$ for the \gls{rm}$\left(m,r\right)$ code.
\end{enumerate}
The dual code of the \gls{rm}$\left(m,r\right)$ code is the \gls{rm}$\left(m,m-r-1\right)$ code, so the parity-check matrix $\bm{H}$ of the RM$(m,r)$ code is $\bm{H}=\bm{G}(m,m-r-1)$~\cite{RM_S}.
Valid codewords $\bm{c}\in \text{RM}(m,r)$ produce an all-zeros syndrome vector $\bm{s}=\bm{H}\bm{c}^{\top}=\bm{0}$.

\subsection{Collapsed Projection-Aggregation Decoding}
\gls{cpa} decoding is a three-step iterative decoding algorithm, namely:

\begin{enumerate}
    \item Projection: In this step, the received \gls{llr} vector $\bm{l}$ of the \gls{rm}$\left(m,r\right)$ code is used to compute the \gls{llr} of the \gls{rm}$\left(m-r+1,1\right)$ codes by projecting to the $\left(r-1\right)$-dimensional subspace $\mathbbm{B}_{i}$
    \begin{equation}
        \begin{split}
            \bm{l}_{/\mathbbm{B}_{i}}(T)
            &=2\tanh^{-1}\left(\underset{z\in T}{\prod}\tanh\left(\frac{\bm{l}(z)}{2}\right)\right)\text{,}\\
        \end{split}
    \label{eqn:cpallrproj}
    \end{equation}
    where $T$ is the coset of the subspace $\mathbbm{B}_{i}$ of dimensions $r-1$.
    It is shown in~\cite[Thm. 2]{Li2023improvePAlist} that the $\left(r-1\right)$-dimensional subspaces can be decomposed into $r-1$ $1$-dimensional subspaces:
    \begin{equation}
    \begin{split}
        &h_{2^{r-1}}(l_{1},l_{2},...,l_{2^{r-1}})\\ &= 2\tanh^{-1}\left(\prod_{i=1}^{2^{r-1}}\tanh\left(\frac{l_{i}}{2}\right)\right)\\
        &= h_{2} (h_{2}(h_{2}...), h_{2}(h_{2}...))\text{,}
    \end{split}
    \label{eqn:dimension_decomposition}
    \end{equation}
    where $h_{2}\left(A,B\right) = 2\tanh^{-1}\left(\tanh\left(\frac{A}{2}\right)\tanh\left(\frac{B}{2}\right)\right)$ is the projection function for the $1$-dimensional subspace, which is also known as the \emph{box-plus} ($\boxplus$) operator~\cite{hagenauer2002iterative}, and $l_{i}\in\bm{l}$.
    There are $n_{\mathbbm{B}}=\binom{m}{r-1}_{2}$ different $(r-1)$-dimensional subspaces in \gls{cpa} decoding.
    \item \gls{fht} decoding: The projected \gls{llr} vector of \gls{rm}$\left(m-r+1,1\right)$ codes is decoded by \gls{ml} \gls{fht} decoding, and the decoded codeword is defined as 
    \begin{equation}
        \hat{\bm{y}}_{/\mathbbm{B}_{i}} = \argmax_{\bm{c}\in\mathrm{\gls{rm}}\left(m-r+1,1\right)}\mathbbm{P}\left(\bm{l}_{/\mathbbm{B}_{i}} \mid \bm{c}\right)\text{,}
        \label{eqn:def_ml_fht}
    \end{equation}
    where $\bm{l}_{/\mathbbm{B}_{i}}$ is the \gls{llr} vector after the projection function.
    The decoded bit $\hat{\bm{y}}_{/\mathbbm{B}_{i}}\!\left(T\right)$ estimates the parity check of received code bits in the same coset $T$.
    \item Aggregation: Given the received \gls{llr} vector and the decoding result from the \gls{fht} decoding, new \gls{llr}s of the received code bits are computed by
    \begin{equation}
    \begin{split}
        &\hat{\bm{l}}\left(z\right)\\
        &=\sum_{i=1}^{n_{B}} (-1)^{\hat{\bm{y}}_{/\mathbbm{B}_{i}}(T)}\\
        &\left(2\tanh^{-1}\left(\underset{{z}_{j}\in T \setminus \{z\}}{\prod}\tanh\left(\frac{\bm{l}(z_{j})}{2}\right)\right)\right)\\
        &\stackrel{(a)}{=}\sum_{i=1}^{n_{B}} 2\tanh^{-1}\left( \tanh\left(\frac{+\infty \times (-1)^{\hat{\bm{y}}_{/\mathbbm{B}_{i}}(T)}}{2}\right)\right.\\
        &\left.\underset{{z}_{j}\in T \setminus \{z\}}{\prod}\tanh\left(\frac{\bm{l}({z}_{j})}{2}\right)\right)\text{,}
    \end{split}
    \label{eqn:llraggr}
    \end{equation}
    where the equality $(a)$ in~\eqref{eqn:llraggr} holds according to~\cite[equation $(28)$]{Li2023improvePAlist}.
    The average of the aggregated \gls{llr} vector ($\bar{\bm{l}}=\hat{\bm{l}}/ n_{\mathbbm{B}}$) is either fed to the next iteration, or hard decisions of $\bar{\bm{l}}$ is returned as the decoded codeword if the maximum number ($N_{\text{max}}$) of iterations or an early-stopping criterion (e.g., the difference in the L2 norm between soft information returned from two consecutive iterations is smaller than a threshold $\theta$~\cite{RPA,RPA_BP}) is met.
\end{enumerate}

\section{Revisiting \gls{cpa} Decoding}
\label{sec:keyprop_CPA}
Before building the density evolution model, several key properties should be verified so that this analysis tool can be used.
For example, the density evolution for \gls{ldpc} codes is performed under the assumption of an infinite code length~\cite{richardson2001deldpc} because factor graphs of \gls{ldpc} codes will have a tree structure~\cite[Sec. 3.7.2]{richardson2008modern}, and the exact marginal probability can be computed by the \gls{bp} decoding when the factor graph structure is a tree~\cite[Sec. 2.2]{richardson2008modern}.

We first show that the three-step process of \gls{cpa} decoding resembles the exact marginal probability, and results returned from different subspaces can be viewed as different realizations of the random variable.
Also, to simplify the analysis process, we usually assume the all-zeros or all-ones codeword is sent.
Hence, we need to show that \gls{cpa} decoding is symmetric regardless of the transmitted codeword.
While the symmetry of the \gls{rpa} decoding is proved in~\cite{RPA}, the proof of the symmetry of \gls{cpa} decoding is missing in the literature, and we prove the symmetry property of \gls{cpa} decoding by modifying the proof for the \gls{rpa} decoding in~\cite{RPA}.

\subsection{Exact Marginal Probability Computed By \gls{cpa} Decoding}
In this section, we would like to show that \gls{cpa} decoding estimates the exact marginal probability, so we do not need to find whether there are cycles present in the Tanner graph structure for \gls{cpa} decoding~\cite{RPA_BP}.

The \gls{llr} $\bm{l}\left(z\right)$ used in the first iteration is defined as 
\begin{equation}
    \bm{l}\left( z \right) = \ln\left( \frac{\mathbbm{P}\left( \mathsf{Y}\left(z\right) = \bm{y}\left(z\right) \mid  \bm{c}\left(z\right) = 0 \right) }{\mathbbm{P}\left( \mathsf{Y}\left(z\right) = \bm{y}\left(z\right) \mid  \bm{c}\left(z\right) = 1 \right)} \right)\text{,}
    \label{eqn:def_llr}
\end{equation}
where $\mathsf{Y}\left(z\right)$ is the output random variable at index $z$ from the \gls{biawgn} channel, and $\bm{y}\left(z\right)$ is the realization of $\mathsf{Y}\left(z\right)$.
The following proposition shows the meaning behind the projection function.
\begin{prop}
    The projection to the $\left(r-1\right)$-dimensional subspace is to compute 
    \begin{equation}
        \ln\left( \frac{\mathbbm{P}\left( \left\{\mathsf{Y}\left(z\right)=\bm{y}\left(z\right)\mid z\in T\right\} \mid  \oplus_{z\in T} \bm{c}\left(z\right) = 0 \right)}{\mathbbm{P}\left( \left\{\mathsf{Y}\left(z\right)=\bm{y}\left(z\right)\mid z\in T\right\} \mid  \oplus_{z\in T} \bm{c}\left(z\right) = 1 \right)} \right)\text{,}
    \end{equation}
    where $T$ is a coset of the $\left(r-1\right)$-dimensional subspace $\mathbbm{B}_{i}$, and $z$ is the index in the coset $T$.
    \label{prop:proj_llr}
\end{prop}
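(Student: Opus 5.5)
We plan to prove Proposition~\ref{prop:proj_llr} by induction on the coset size $|T| = 2^{r-1}$, using the decomposition~\eqref{eqn:dimension_decomposition}. By~\eqref{eqn:dimension_decomposition}, the projection onto an $(r-1)$-dimensional subspace is a nested application of the two-input box-plus $h_2$. It therefore suffices to prove a slightly more general statement. For any finite index set $S$ whose bits are observed through independent channel uses, we claim that
\begin{equation*}
2\tanh^{-1}\left(\prod_{z\in S}\tanh\left(\frac{\bm{l}(z)}{2}\right)\right)
=\ln\left(\frac{\mathbbm{P}\left(\{\mathsf{Y}(z)=\bm{y}(z)\}_{z\in S}\mid \oplus_{z\in S}\bm{c}(z)=0\right)}{\mathbbm{P}\left(\{\mathsf{Y}(z)=\bm{y}(z)\}_{z\in S}\mid \oplus_{z\in S}\bm{c}(z)=1\right)}\right).
\end{equation*}
Taking $S=T$ then yields the proposition. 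We will make the standard assumption that, conditioned on the parity, the code bits in $T$ are uniformly distributed over all patterns with that parity, i.e., a uniform prior on each bit. Under this assumption the likelihood ratio is well defined and coincides with the a posteriori LLR of the parity.

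The first key step is to pass from LLRs to "soft bits". For a single index, define $p_0(z)=\mathbbm{P}(\mathsf{Y}(z)=\bm{y}(z)\mid \bm{c}(z)=0)$ and $p_1(z)$ analogously. By~\eqref{eqn:def_llr}, $\tanh(\bm{l}(z)/2)=\frac{p_0(z)-p_1(z)}{p_0(z)+p_1(z)}$. For the set $S$, we write the conditional likelihoods of the parity events as sums over bit patterns. Channel independence makes each pattern's likelihood factor as a product. Uniformity of the patterns gives
\begin{equation*}
\mathbbm{P}(\cdot\mid \text{parity}=b)=2^{-(|S|-1)}\sum_{\bm{c}:\,\oplus\bm{c}=b}\prod_{z\in S}p_{\bm{c}(z)}(z).
\end{equation*}
The core identity is then
\begin{equation*}
\sum_{\oplus\bm{c}=0}\prod_{z}p_{\bm{c}(z)}(z)-\sum_{\oplus\bm{c}=1}\prod_{z}p_{\bm{c}(z)}(z)=\prod_{z\in S}\left(p_0(z)-p_1(z)\right),
\end{equation*}
which follows by expanding the right-hand side. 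Combined with $\sum_{\text{all }\bm{c}}\prod_z p_{\bm{c}(z)}(z)=\prod_z(p_0(z)+p_1(z))$, it shows that the ratio (difference)/(sum) of the two parity likelihoods equals $\prod_z\tanh(\bm{l}(z)/2)$. Finally, the identity $\ln\frac{a}{b}=2\tanh^{-1}\frac{a-b}{a+b}$ converts this back to the claimed LLR form.

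Alternatively, the same argument can be organized inductively. First we verify the base case $|S|=2$, which is exactly $h_2$. Then we apply the induction hypothesis to the two halves of the coset in~\eqref{eqn:dimension_decomposition}. The two halves involve disjoint channel outputs, and their parities are independent under the uniform prior. The parity of $T$ is the XOR of the two half-parities, so the $|S|=2$ case applies to the two "super-bits" whose LLRs are given by the induction hypothesis. Both routes are routine. The main obstacle is not the algebra but making the probabilistic conditioning precise. Conditioning on a parity event rather than on individual bits requires specifying the distribution of the bits within $T$ given the parity, namely uniform and independent across the two halves. The proof should state this assumption explicitly, and justify it from the linearity of RM codes and a uniformly random codeword.
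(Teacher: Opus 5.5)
Your proof is correct, and your main route differs from the paper's. The paper never computes the multi-input likelihood directly. It imports the two-input case $h_{2}$ from prior work (the chain of equalities in \eqref{eqn:proj_1d_physical_meaning}). It then uses the nested box-plus decomposition \eqref{eqn:dimension_decomposition} to view a $2$-dimensional projection as a $1$-dimensional projection of the two ``super-bits'' $\bm{c}(z_{1})\oplus\bm{c}(z_{2})$ and $\bm{c}(z_{3})\oplus\bm{c}(z_{4})$, and finishes with ``by induction''. That is exactly your secondary, inductive route.

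Your primary route instead proves the parity identity $\sum_{\oplus\bm{c}=0}\prod_{z}p_{\bm{c}(z)}(z)-\sum_{\oplus\bm{c}=1}\prod_{z}p_{\bm{c}(z)}(z)=\prod_{z}\left(p_{0}(z)-p_{1}(z)\right)$ in one stroke for an arbitrary index set $S$. This buys three things:
\begin{itemize}
\item it is self-contained and does not rely on the cited two-input result;
\item it needs neither the decomposition \eqref{eqn:dimension_decomposition} nor $|T|$ to be a power of two;
\item it shows exactly where the prior enters: the normalization $2^{-(|S|-1)}$ cancels only because all patterns of a given parity are equally likely.
\end{itemize}

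The paper's route has the merit of mirroring how the decoder actually evaluates the projection (nested $h_{2}$). However, its step (a) in \eqref{eqn:proj_2d_physical_meaning} silently uses two facts. First, the two half-parities are independent and uniform. Second, given them, the events $A$ and $B$ are conditionally independent, each with a likelihood depending only on its own half-parity. This is precisely the assumption you make explicit.

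When you justify that assumption, linearity alone is not enough. It only makes the restriction $(\bm{c}(z))_{z\in T}$ of a uniformly random codeword uniform on the \emph{image} of the restriction map. You also need this map to be onto $\{0,1\}^{T}$. That holds because every function on an $(r-1)$-dimensional flat is a polynomial of degree at most $r-1$ in affine coordinates. Hence every such function is the restriction of a codeword of $\text{RM}(m,r-1)\subseteq\text{RM}(m,r)$.
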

\begin{proof}
    Let $z_{1}$ and $z_{2}$ denote two different indices. It is shown in~\cite{RPA} that the projection to the one-dimensional subspace can be computed by
    \begin{equation}
        \begin{split}&\bm{l}_{/1}\!\left(z_{1},z_{2}\right)\\
        &=\ln\left(  \frac{\exp\left(\bm{l}\left(z_{1}\right)+\bm{l}\left( z_{2} \right)\right)+1}{\exp\left(\bm{l}\left(z_{1}\right)\right) +\exp\left(\bm{l}\left(z_{2}\right)\right)}\right)\\
        &\stackrel{(a)}{=} 2\tanh^{-1}\left(\tanh\left(\frac{\bm{l}\left(z_{1}\right)}{2}\right)\tanh\left(\frac{\bm{l}\left(z_{2}\right)}{2}\right)\right)\\
        &\stackrel{(b)}{=}\ln\left( \frac{\mathbbm{P}\left(\mathsf{Y}_{1}=y_{1},\mathsf{Y}_{2}=y_{2}\mid  \bm{c}\left(z_{1}\right)\oplus\bm{c}\left(z_{2}\right)=0\right)  }{\mathbbm{P}\left(\mathsf{Y}_{1}=y_{1},\mathsf{Y}_{2}=y_{2}\mid  \bm{c}\left(z_{1}\right)\oplus\bm{c}\left(z_{2}\right)=1\right)} \right)\text{,}
        \end{split}
        \label{eqn:proj_1d_physical_meaning}
    \end{equation}
    where $y_{i}$ is the received symbol corresponding to $\bm{l}\left(z_{i}\right)$, and $\mathsf{Y}_{i}$ is the random variable.
    The equality $(a)$ in~\eqref{eqn:proj_1d_physical_meaning} is shown in~\cite{RPA_BP}, and the equality $(b)$ in~\eqref{eqn:proj_1d_physical_meaning} is shown in~\cite[equation $\left(12\right)$]{RPA}.
    
    It is also shown in~\eqref{eqn:dimension_decomposition} that the projection of the $\left(r-1\right)$-dimensional subspace can be decomposed into the composition of the projection on the $1$-dimensional subspace.
    Hence, by induction, the projection to the $2$-dimensional subspace can be decomposed into the composition of the projection to the $1$-dimensional subspace.
    Let $z_{1}$ and $z_{2}$ denote the indices used in the first $1$-dimensional projection, and $z_{3}$ and $z_{4}$ denote indices used in the second $1$-dimensional projection.
    Let $A$ and $B$ denote the events $\left(\mathsf{Y}_{1}=y_{1},\mathsf{Y}_{2}=y_{2}\right)$ and $\left(\mathsf{Y}_{3}=y_{3},\mathsf{Y}_{4}=y_{4}\right)$, respectively, and $\bm{c}_{/1}\!\left(z_{i},z_{j}\right) := \bm{c}\left(z_{i}\right)\oplus\bm{c}\left(z_{j}\right)$.
    The $2$-dimensional projection involved $4$ different \gls{llr}s can be computed by applying the $1$-dimensional projection on the results returned from the previous two $1$-dimensional projections
    \begin{equation}
        \begin{split}&2\tanh^{-1}\left(\tanh\left(\frac{\bm{l}_{/1}\!\left(z_{1},z_{2}\right)}{2}\right)\tanh\left(\frac{\bm{l}_{/1}\!\left(z_{3},z_{4}\right)}{2}\right)\right)\\
        &\stackrel{(a)}{=}\ln\left( \frac{\mathbbm{P}\left(A,B\mid \bm{c}_{/1}\!\left(z_{1},z_{2}\right)\oplus\bm{c}_{/1}\!\left(z_{3},z_{4}\right)=0\right)  }{\mathbbm{P}\left(A,B\mid \bm{c}_{/1}\!\left(z_{1},z_{2}\right)\oplus\bm{c}_{/1}\!\left(z_{3},z_{4}\right)=1\right)} \right)\\
        &=\ln\left( \frac{\mathbbm{P}\left(\mathsf{Y}_{1}=y_{1},...,\mathsf{Y}_{4}=y_{4}\mid \oplus_{i=1}^{4}\bm{c}\left(z_{i}\right)=0\right)  }{\mathbbm{P}\left(\mathsf{Y}_{1}=y_{1},...,\mathsf{Y}_{4}=y_{4}\mid \oplus_{i=1}^{4}\bm{c}\left(z_{i}\right)=1\right)} \right)\text{,}
        \end{split}\label{eqn:proj_2d_physical_meaning}
    \end{equation}
    where the equality $(a)$ in~\eqref{eqn:proj_2d_physical_meaning} is hold by the definition of the projection on the $1$-dimensional subspace~\eqref{eqn:proj_1d_physical_meaning}.

    By induction, the projection on the $\left(r-1\right)$-dimensional subspace is to compute
    \begin{equation}
    \begin{split}
    &\bm{l}_{/\mathbbm{B}_{i}}(T)\\
    &=2\tanh^{-1}\left(\underset{z\in T}{\prod}\tanh\left(\frac{\bm{l}(z)}{2}\right)\right)\\
    &=\ln\left( \frac{\mathbbm{P}\left( \left\{\mathsf{Y} \left(z\right)=\bm{y}\left(z\right)\mid z\in T\right\} \mid  \oplus_{z\in T} \bm{c}\left(z\right) = 0 \right)}{\mathbbm{P}\left( \left\{\mathsf{Y}\left(z\right)=\bm{y}\left(z\right)\mid  z\in T\right\} \mid  \oplus_{z\in T} \bm{c}\left(z\right) = 1 \right)} \right)\text{.}
    \end{split}
    \end{equation}
\end{proof}
The meaning of this $\tanh\left(\cdot\right)$ update rule is also well-investigated in the literature, such as in~\cite{hagenauer2002iterative} (i.e., box-plus $\boxplus$), and used in theoretical analysis of decoding algorithms~\cite{richardson2001deldpc,saeyoung2001Gaussianapproximation}.
More details regarding this $\tanh\left(\cdot\right)$ update rule can refer to~\cite{saeyoung2001Gaussianapproximation,luby1998analysis,forney2002codes,hartmann1976optimum,battail1979replication,hagenauer2002iterative}.
To fit into the context of the \gls{pa} decoding, we extend the proof in~\cite{RPA} to \gls{cpa} decoding to support the analysis that will be done in this work.

As shown in~\eqref{eqn:llraggr}, the output $\hat{\bm{y}}_{/\mathbbm{B}_{i}}\!\left(T\right)$ from the \gls{fht} decoding can be viewed as having a \gls{llr}
\begin{equation}
\begin{split}
    &\ln \left( \frac{\mathbbm{P}\left(\bm{l}_{/\mathbbm{B}_{i}}\mid \hat{\bm{y}}_{/\mathbbm{B}_{i}},\hat{\bm{y}}_{/\mathbbm{B}_{i}}\!\left(T\right)=0\right)}{\mathbbm{P}\left(\bm{l}_{/\mathbbm{B}_{i}}\mid \hat{\bm{y}}_{/\mathbbm{B}_{i}},\hat{\bm{y}}_{/\mathbbm{B}_{i}}\!\left(T\right)=1\right)} \right)\\
    &=\ln \left( \frac{\mathbbm{P}\left(\bm{l}_{/\mathbbm{B}_{i}},\hat{\bm{y}}_{/\mathbbm{B}_{i}}\mid \hat{\bm{y}}_{/\mathbbm{B}_{i}}\!\left(T\right)=0\right)/\mathbbm{P}\left(\hat{\bm{y}}_{/\mathbbm{B}_{i}}\right)}{\mathbbm{P}\left(\bm{l}_{/\mathbbm{B}_{i}},\hat{\bm{y}}_{/\mathbbm{B}_{i}}\mid \hat{\bm{y}}_{/\mathbbm{B}_{i}}\!\left(T\right)=1\right)/\mathbbm{P}\left(\hat{\bm{y}}_{/\mathbbm{B}_{i}}\right)} \right)\\
    &\stackrel{(a)}{=}\ln \left( \frac{\mathbbm{P}\left(\bm{l}_{/\mathbbm{B}_{i}}\mid \hat{\bm{y}}_{/\mathbbm{B}_{i}}\!\left(T\right)=0\right)}{\mathbbm{P}\left(\bm{l}_{/\mathbbm{B}_{i}}\mid \hat{\bm{y}}_{/\mathbbm{B}_{i}}\!\left(T\right)=1\right)} \right)\\
    &\stackrel{(b)}{=}\ln\left( \frac{\mathbbm{P}\left( \mathsf{Y} \left(z\right) = \bm{y}\left(z\right)\;\forall z\in\{0,1\}^{m} \mid \hat{\bm{y}}_{/\mathbbm{B}_{i}}\!\left(T\right) = 0 \right) }{\mathbbm{P}\left( \mathsf{Y} \left(z\right) = \bm{y}\left(z\right)\;\forall z\in\{0,1\}^{m} \mid \hat{\bm{y}}_{/\mathbbm{B}_{i}}\!\left(T\right) = 1 \right)} \right)\\
    &\in \left\{ +\infty,-\infty \right\}\text{,}
\end{split}
\label{eqn:llr_def_fht}
\end{equation}
where the equality $(a)$ holds because the value of $\hat{\bm{y}}_{/\mathbbm{B}_{i}}$ is the output from the \gls{fht} decoding with a probability of $1$ given the input $\bm{y}_{/\mathbbm{B}_{i}}$ and a fixed ordering in the code book, and the interception of the event $A$ of $\bm{l}_{/\mathbbm{B}_{i}}$ and the event $B$ of $\hat{\bm{y}}_{/\mathbbm{B}_{i}}$ is $A\bigcap B= A$, and the equality $(b)$ holds because the $\bm{l}_{/\mathbbm{B}_{i}}$ and the $\hat{\bm{y}}_{/\mathbbm{B}_{i}}$ are the fixed returned solution given a fixed ordering of the code book and a $\mathbbm{B}_{i}$.

Based on the derivation of~\eqref{eqn:llr_def_fht}, we reformulate the aggregation function~\eqref{eqn:llraggr} in~\eqref{eqn:aggr_marginalprob}.
Then, by Proposition~\ref{prop:proj_llr} and the meaning of the results from \gls{fht} decoding~\eqref{eqn:llr_def_fht}, the aggregation function is to compute the equality $\left(a\right)$ in~\eqref{eqn:aggr_marginalprob}.
\begin{figure*}
[!t]
\normalsize
\setcounter{mytempeqncnt}{\value{equation}}
\setcounter{equation}{12}
\begin{equation}
\begin{split}
    &2\tanh^{-1}\left( \tanh\left(\frac{+\infty \times (-1)^{\hat{\bm{y}}_{/\mathbbm{B}_{i}}(T)}}{2}\right)\underset{{z}_{i}\in T \setminus \{z\}}{\prod}\tanh\left(\frac{\bm{l}({z}_{i})}{2}\right)\right)\\
    &\stackrel{(a)}{=}\ln\left( \frac{\mathbbm{P}\left( \left\{\mathsf{Y} \left(z'\right)=\bm{y}\left(z'\right)\mid z'\in T\setminus \{z\}\right\}, \mathsf{Y} \left(z\right) = \bm{y}\left(z\right)\forall z\in\{0,1\}^{m}\mid  \left(\oplus_{z'\in T\setminus \{z\}} \bm{c}\left(z'\right) \right) \oplus \hat{\bm{y}}_{/\mathbbm{B}_{i}}\!\left(T\right) = 0 \right)}{\mathbbm{P}\left( \left\{\mathsf{Y} \left(z'\right)=\bm{y}\left(z'\right)\mid z'\in T\setminus \{z\}\right\}, \mathsf{Y}\left(z\right) = \bm{y}\left(z\right)\forall z\in\{0,1\}^{m}\mid  \left(\oplus_{z'\in T\setminus \{z\}} \bm{c}\left(z'\right) \right) \oplus \hat{\bm{y}}_{/\mathbbm{B}_{i}}\!\left(T\right) = 1 \right)}\right)\\
    &=\ln\left( \frac{\mathbbm{P}\left( \left\{\mathsf{Y} \left(z'\right)=\bm{y}\left(z'\right)\mid z'\in T\setminus \{z\}\right\}, \mathsf{Y} \left(z\right) = \bm{y}\left(z\right)\forall z\in\{0,1\}^{m}\mid \bm{c}\left(z\right)= 0 \right)}{\mathbbm{P}\left( \left\{\mathsf{Y} \left(z'\right)=\bm{y}\left(z'\right)\mid z'\in T\setminus \{z\}\right\}, \mathsf{Y}\left(z\right) = \bm{y}\left(z\right)\forall z\in\{0,1\}^{m}\mid \bm{c}\left(z\right) = 1 \right)}\right)\\
    &\stackrel{(b)}{=}\ln\left(  \frac{\left(\prod_{z'\in \left\{0,1\right\}^{m}\setminus \{z\} } \mathbbm{P}\left(\mathsf{Y}\left(z'\right)=\bm{y}\left(z'\right)\right)\right) \mathbbm{P}\left(\mathsf{Y}\left(z\right)=\bm{y}\left(z\right)\mid \bm{c}(z)=0\right)}{\left(\prod_{z'\in \left\{0,1\right\}^{m}\setminus \{z\} } \mathbbm{P}\left(\mathsf{Y}\left(z'\right)=\bm{y}\left(z'\right)\right)\right) \mathbbm{P}\left(\mathsf{Y}\left(z\right)=\bm{y}\left(z\right)\mid \bm{c}(z)=1\right)} \right)=\ln\left( \frac{\mathbbm{P}\left(\mathsf{Y}\left(z\right)=\bm{y}\left(z\right)\mid \bm{c}\left(z\right) = 0\right)}{\mathbbm{P}\left(\mathsf{Y}\left(z\right)=\bm{y}\left(z\right)\mid  \bm{c}\left(z\right) = 1\right)} \right)\text{.}\\
    \end{split}
    \label{eqn:aggr_marginalprob}
\end{equation}
\setcounter{equation}{13}
\hrulefill
\vspace*{4pt}
\end{figure*}
The equality $(b)$ in~\eqref{eqn:aggr_marginalprob} holds because $\mathsf{Y}(z)$ is independent of $\mathsf{Y}(z')$, and $\mathsf{Y}\left(z'\right)$ is independent of $\bm{c}\left(z\right)$ for $z\neq z'$.
From~\eqref{eqn:aggr_marginalprob}, we can see that the aggregation recovers the exact marginal probability, and the summation and the averaging step aim to find an accurate estimation of $\bm{l}\left(z\right)$ across all automorphisms.

To mimic the \gls{bp} decoding, an extrinsic update
\begin{equation}
\begin{split}
    \bm{l}_{\rightarrow\mathbbm{B}_{i}}(z) &= \frac{1}{n_{\mathbbm{B}}} 
    \sum_{j \in \{ 1,2,...,n_{\mathbbm{B}}\} \setminus \{i\}}    
    (-1)^{\hat{\bm{y}}_{/\mathbbm{B}_{j}}(T)}\\
        &\left(2\tanh^{-1}\left(\underset{{z}_{k}\in T \setminus \{z\}}{\prod}\tanh\left(\frac{\bm{l}(z_{k})}{2}\right)\right)\right)\text{,}
\end{split}
\end{equation}
where $\bm{l}_{\rightarrow\mathbbm{B}_{i}}$ is the \gls{llr} vector used by the subspace $\mathbbm{B}_{i}$ in the second iteration,
is used to update the \gls{llr} for the next decoding iteration for \gls{cpa} decoding~\cite{RPA}. 
The \gls{bp} decoding excludes the variable information from itself to avoid repeated counting when computing the marginal information.
However, by~\eqref{eqn:aggr_marginalprob}, a realization of the random variable in the marginal probability is recovered in each subspace; hence, there is no need to remove the information decoded by the subspace $\mathbbm{B}_{i}$ when updating the \gls{llr} for the next iteration.
The broadcast update proposed in~\cite{li2024layeredCPA}
\begin{equation}
    \bm{l}_{\rightarrow\mathbbm{B}_{i}}(z) = \frac{1}{n_{\mathbbm{B}}}\hat{\bm{l}}\left(z\right)
    \label{eqn:broadcastllrupdate}
\end{equation}
uses the same \gls{llr} vector for all subspaces for iterations larger than one.
Combining with the marginal probability derived in~\eqref{eqn:aggr_marginalprob}, we can safely use the broadcast update~\eqref{eqn:broadcastllrupdate} and assume that inputs for all subspaces are the same for iterations larger than one.
\subsection{Symmetry of \gls{cpa} Decoding}
To simplify the theoretical analysis, the all-zero codeword can be assumed to be transmitted if the decoding is symmetric (i.e., decoding will have the same decoding performance regardless of the transmitted codeword).
We prove that \gls{cpa} decoding is symmetric by Proposition~\ref{prop:symmetric_CPA}, which implies that we can use the all-zeros codeword assumption in this work.
Similar to the proof in~\cite{RPA}, the following \gls{ml} metric
\begin{equation}
    \quad\frac{1}{2}\sum_{z\in\mathbbm{E}:=\mathbbm{F}_{2}^{m}}\left((-1)^{\bm{c}(z)}\bm{l}\left(z\right)\right)\text{,}
    \label{eqn:likelihood_list}
\end{equation}
which is the same as~\cite[Eq. (9)]{RPA}, is defined and is used in the proof.
\begin{lemma}
    Let $\bm{c}_{0}=\left( \bm{c}_{0}\left(z\right)\text{, } z\in\mathbbm{E}\right)$, where $\mathbbm{E}:=\mathbbm{F}_{2}^{m}$, be a codeword of the $\text{RM}\left(m,r\right)$ code.
    Let $\bm{l}^{(1)}=\left( \bm{l}^{(1)}(z)\text{, } z\in\mathbbm{E}\right)$ and $\bm{l}^{(2)}=\left( \bm{l}^{(2)}(z)\text{, } z\in\mathbbm{E}\right)$ be two LLR vectors such that 
    \begin{equation}
        \quad\bm{l}^{(2)}(z) = (-1)^{\bm{c}_{0}\left(z\right)} \bm{l}^{(1)}(z)\quad\forall z\in\mathbbm{E}\text{.}
        \label{eqn:sym_LLR}
    \end{equation}
    Denote $\hat{\bm{c}}_{1}=\mathrm{CPA}\left( \bm{l}^{(1)}, m, r, N_{\text{max}}\right)$ and $\hat{\bm{c}}_{2}=\mathrm{CPA}\left( \bm{l}^{(2)}, m, r, N_{\text{max}}\right)$.
    Then $\hat{\bm{c}}_{2} = \hat{\bm{c}}_{1} + \bm{c}_{0}$.
    \label{lemma:symmetric_decoded_llr}
\end{lemma}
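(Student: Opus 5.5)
The plan is to prove a stronger statement by induction on the iteration index: after every iteration $t$, the averaged aggregated LLR vectors satisfy $\bar{\bm{l}}^{(2),t}(z) = (-1)^{\bm{c}_{0}(z)}\bar{\bm{l}}^{(1),t}(z)$ for all $z\in\mathbbm{E}$. The hard decision on $\bar{\bm{l}}$ then flips exactly in the positions where $\bm{c}_{0}(z)=1$, which gives $\hat{\bm{c}}_{2}=\hat{\bm{c}}_{1}+\bm{c}_{0}$. The base case $t=0$ is the hypothesis~\eqref{eqn:sym_LLR}.

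For the inductive step, suppose the input LLRs satisfy~\eqref{eqn:sym_LLR}. We track the three steps of \gls{cpa} decoding.

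\textbf{Projection.} For a coset $T$ of $\mathbbm{B}_{i}$, oddness of $\tanh$ in~\eqref{eqn:cpallrproj} gives
\begin{equation*}
\bm{l}^{(2)}_{/\mathbbm{B}_{i}}(T)=(-1)^{\oplus_{z\in T}\bm{c}_{0}(z)}\,\bm{l}^{(1)}_{/\mathbbm{B}_{i}}(T).
\end{equation*}
The projected word $\bm{c}_{0/\mathbbm{B}_{i}}(T):=\oplus_{z\in T}\bm{c}_{0}(z)$ is a codeword of RM$(m-r+1,1)$. This is the standard fact, used in~\cite{RPA}, that projecting an RM$(m,r)$ codeword onto the cosets of an $(r-1)$-dimensional subspace lowers the degree by $r-1$.

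\textbf{FHT decoding.} Next we show $\hat{\bm{y}}^{(2)}_{/\mathbbm{B}_{i}}=\hat{\bm{y}}^{(1)}_{/\mathbbm{B}_{i}}+\bm{c}_{0/\mathbbm{B}_{i}}$. Write the ML metric~\eqref{eqn:likelihood_list} for the projected code. For every candidate $\bm{c}$,
\begin{equation*}
\sum_{T}(-1)^{\bm{c}(T)}\bm{l}^{(2)}_{/\mathbbm{B}_{i}}(T)=\sum_{T}(-1)^{\bm{c}(T)\oplus\bm{c}_{0/\mathbbm{B}_{i}}(T)}\bm{l}^{(1)}_{/\mathbbm{B}_{i}}(T).
\end{equation*}
Since $\bm{c}\mapsto\bm{c}+\bm{c}_{0/\mathbbm{B}_{i}}$ is a bijection of the codebook, the maximizers correspond.

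\textbf{Aggregation.} In~\eqref{eqn:llraggr}, the sign of each term for bit $z$ changes by
\begin{equation*}
(-1)^{\bm{c}_{0/\mathbbm{B}_{i}}(T)}\cdot(-1)^{\oplus_{z'\in T\setminus\{z\}}\bm{c}_{0}(z')}=(-1)^{\bm{c}_{0}(z)}.
\end{equation*}
This factor is the same for every subspace $\mathbbm{B}_{i}$, so the sum $\hat{\bm{l}}$ and the average $\bar{\bm{l}}$ inherit it. Both the broadcast update~\eqref{eqn:broadcastllrupdate} and the extrinsic update are linear combinations of these per-subspace terms, so the next input again satisfies~\eqref{eqn:sym_LLR} and the induction closes. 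The early-stopping criterion is also unaffected, because the L2-norm difference between consecutive iterates is invariant under the coordinatewise sign flip. Hence both runs stop at the same iteration.

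\textbf{Main obstacle.} I expect the main obstacle to be ties in the FHT arg-max. With a fixed codebook ordering as tie-breaker, the bijection $\bm{c}\mapsto\bm{c}+\bm{c}_{0/\mathbbm{B}_{i}}$ need not respect that ordering, so the tie-broken maximizers may fail to correspond. I would first treat ties as a probability-zero event, since LLRs from the \gls{biawgn} channel are continuous. The alternative is to assume a tie-breaking rule equivariant under codeword translation, as is implicitly done in~\cite{RPA}. Ties in the final hard decision, where $\bar{\bm{l}}(z)=0$, are handled the same way.
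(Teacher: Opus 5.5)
Your proposal is correct and follows the paper's proof. Both arguments use the same three steps: the sign relation through the $\tanh$ projection; the fact that $\bigl(\oplus_{z\in T}\bm{c}_{0}(z)\bigr)_{T}$ lies in RM$(m-r+1,1)$, so the ML (FHT) decision is translated by it; and the common factor $(-1)^{\bm{c}_{0}(z)}$ in every aggregation term. Your induction over iterations and your check of the early-stopping rule make explicit what the paper leaves implicit.

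On ties, keep your probability-zero route and drop the translation-equivariant deterministic tie-break, because no such rule can exist. For $\bm{c}_{0}\neq\bm{0}$, the all-zero LLR vector is unchanged by the sign flip, so any deterministic decoder returns $\hat{\bm{c}}_{2}=\hat{\bm{c}}_{1}\neq\hat{\bm{c}}_{1}+\bm{c}_{0}$ there. The lemma therefore genuinely needs the no-ties assumption that the paper uses tacitly. A uniformly random tie-break gives the relation only in distribution, though that still suffices for the channel-symmetry proposition that follows.
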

\begin{proof}
    The proof starts with the base case $r=1$ using the ML decoding, the \gls{fht} decoding, and the procedure is similar to the proof in~\cite{RPA}.
    Hence, according to~\eqref{eqn:likelihood_list}, $\hat{\bm{c}}_{2}=\text{CPA}\left( \bm{l}^{(2)}, m, 1, N_{\text{max}}\right)$ is the codeword in $\text{RM}(m,1)$, and 
    \begin{equation}
    \begin{split}
    &\sum_{z\in\mathbbm{E}}\left((-1)^{\hat{\bm{c}}_{2}(z)}\bm{l}^{\left(2\right)}\left(z\right)\right) \geq \sum_{z\in\mathbbm{E}}\left((-1)^{\bm{c}(z)}\bm{l}^{\left(2\right)}\left(z\right)\right)\\
    &\forall \bm{c}\in \text{RM}\left(m,1\right)\text{.}
    \end{split}
    \end{equation}
    By~\eqref{eqn:sym_LLR}, $\forall \bm{c}\in \text{RM}\left(m,1\right)$, we have
    \begin{equation}
    \begin{split}
        \sum_{z\in\mathbbm{E}}\left((-1)^{\hat{\bm{c}}_{2}(z)\oplus\bm{c}_{0}\left(z\right)}\bm{l}^{\left(1\right)}\left(z\right)\right) \geq \sum_{z\in\mathbbm{E}}\left((-1)^{\bm{c}(z)\oplus\bm{c}_{0}\left(z\right)}\bm{l}^{\left(1\right)}\left(z\right)\right)\text{.}
    \end{split}
    \end{equation}
    Because $\bm{c}_{0}\in \text{RM}\left(m,1\right)$, $\bm{c}_{0}+ \text{RM}\left(m,1\right)=\text{RM}\left(m,1\right)$.
    Hence, $\forall\bm{c}\in\text{RM}\left(m,1\right)$,
    \begin{equation}
    \begin{split}
        \sum_{z\in\mathbbm{E}}\left((-1)^{\hat{\bm{c}}_{2}(z)\oplus\bm{c}_{0}\left(z\right)}\bm{l}^{\left(1\right)}\left(z\right)\right) \geq \sum_{z\in\mathbbm{E}}\left((-1)^{\bm{c}\left(z\right)}\bm{l}^{\left(1\right)}\left(z\right)\right)\text{.}
    \end{split}
    \end{equation}
    Therefore, for the base case $r=1$, we can conclude that $\hat{\bm{c}}_{1} = \hat{\bm{c}}_{2} + \bm{c}_{0}$ is the codeword in $\text{RM}\left(m,1\right)$ that maximize
    \begin{equation}
        \sum_{z\in\mathbbm{E}}\left((-1)^{\bm{c}\left(z\right)}\bm{l}^{\left(1\right)}\left(z\right)\right)\text{.}
    \end{equation}

    For $r>1$, the decoded codeword $\hat{\bm{c}}\left(z\right)$ is determined by the sign of $\bar{\bm{l}}\left(z\right)$.
    If we can show that the updated \gls{llr} vectors $\bar{\bm{l}}^{\left(1\right)}$ and $\bar{\bm{l}}^{\left(2\right)}$ satisfy~\eqref{eqn:sym_LLR}, then $\hat{\bm{c}}_{1} = \hat{\bm{c}}_{2} + \bm{c}_{0}$.

    We know that
    \begin{equation}
    \begin{split}
        &\bar{\bm{l}}^{\left(j\right)}(z)= \frac{1}{n_{\mathbbm{B}}}\sum_{i=1}^{n_{\mathbbm{B}}}\alpha_{j}\left(\mathbbm{B}_{i},T\right)\\
        &\left(2\tanh^{-1}\left(\underset{z_{k}\in T \setminus \{z\}}{\prod}\tanh\left(\frac{\bm{l}^{\left( j\right)}(z_{k})}{2}\right)\right)\right)\text{,}
    \end{split}
    \end{equation}
    where $\alpha_{j}\left(\mathbbm{B}_{i},T\right)=(-1)^{\hat{\bm{y}}_{/\mathbbm{B}_{i}}(T)}$ for $z\in T$, and $T$ is the coset under the subspace $\mathbbm{B}_{i}$.
    Then we show that $\alpha_{2}\left(\mathbbm{B}_{i},T\right) = (-1)^{\oplus_{z\in T} \bm{c}_{0}\left( z\right)}\alpha_{1}\left(\mathbbm{B}_{i},T\right)$.
    It can be seen that $\alpha_{j}\left(\mathbbm{B}_{i}, T\right)$ is determined by $\hat{\bm{y}}_{/\mathbbm{B}_{i}} = \text{CPA}\left(\bm{l}_{/\mathbbm{B}_{i}}^{\left(j\right)}, m-r+1, 1, N_{\text{max}}, \theta\right)$.
    
    From the projection function~\eqref{eqn:cpallrproj}, we can see that
    \begin{equation}
    \begin{split}
    &\bm{l}_{/\mathbbm{B}_{i}}^{\left(2\right)}\left(T\right)\\
    &=2\tanh^{-1}\left(\underset{z\in T}{\prod}\tanh\left(\frac{\bm{l}^{\left(2\right)}(z)}{2}\right)\right)\\
    & = 2\tanh^{-1}\left(\underset{z\in T}{\prod}\tanh\left(\frac{(-1)^{\bm{c}_{0}\left(z\right)} \bm{l}^{(1)}(z)}{2}\right)\right)\\
    & = (-1)^{\oplus_{z\in T} \bm{c}_{0}\left( z\right)} 2\tanh^{-1}\left(\underset{z\in T}{\prod}\tanh\left(\frac{ \bm{l}^{(1)}(z)}{2}\right)\right)\\
    & = (-1)^{\oplus_{z\in T} \bm{c}_{0}\left( z\right)} \bm{l}_{/\mathbbm{B}_{i}}^{\left(1\right)}(T)\text{.}
    \end{split}
    \end{equation}
    Let $\bm{c}_{0}\left(T\right) := \oplus_{z\in T} \bm{c}_{0}\left( z\right)$, we have $\left( \bm{c}_{0}\left(T\right), T\in \mathbbm{E}/ \mathbbm{B}\right)\in \text{RM}\left(m-r+1,1\right)$.
    Hence, the codeword $\left( \bm{c}_{0}\left(T\right), T\in \mathbbm{E}/ \mathbbm{B}_{i}\right)$ and two \gls{llr} vectors $\left( \bm{l}_{/\mathbbm{B}_{i}}^{\left(1\right)}(T), T\in \mathbbm{E}/\mathbbm{B}_{i}\right)$ and  $\left( \bm{l}_{/\mathbbm{B}_{i}}^{\left(2\right)}(T), T\in \mathbbm{E}/\mathbbm{B}_{i}\right)$ satisfy the condition of the base case of this lemma.
    It can be concluded $\alpha_{2}\left(\mathbbm{B}_{i},T\right) = (-1)^{\oplus_{z\in T} \bm{c}_{0}\left( z\right)}\alpha_{1}\left(\mathbbm{B}_{i},T\right)$, and 
    \begin{equation}
    \begin{split}
        &\bar{\bm{l}}^{\left(2\right)}\left(z\right)\\
        &= \frac{1}{n_{\mathbbm{B}}}\\
        &\sum_{i=1}^{n_{\mathbbm{B}}}\alpha_{2}\left(\mathbbm{B}_{i},T\right)\left(2\tanh^{-1}\left(\underset{z_{j}\in T \setminus \{z\}}{\prod}\tanh\left(\frac{\bm{l}^{\left( 2\right)}(z_{j})}{2}\right)\right)\right)\\
        &=\frac{1}{n_{\mathbbm{B}}}\sum_{i=1}^{n_{\mathbbm{B}}}\left((-1)^{\oplus_{z_{j}\in T} \bm{c}_{0}\left( z_{j}\right)}\right)\alpha_{1}\left(\mathbbm{B}_{i},T\right)\left((-1)^{\oplus_{z_{j}\in T\setminus \{z\}}\bm{c}_{0}\left(z_{j}\right)}\right)\\
        &\left(2\tanh^{-1}\left(\underset{z_{j}\in T \setminus \{z\}}{\prod}\tanh\left(\frac{\bm{l}^{\left( 1\right)}(z_{j})}{2}\right)\right)\right)\\
        &=\left(-1\right)^{\bm{c}_{0}\left(z\right)}\frac{1}{n_{\mathbbm{B}}}\\
        &\sum_{i=1}^{n_{\mathbbm{B}}}\alpha_{1}\left(\mathbbm{B}_{i},T\right) \left(2\tanh^{-1}\left(\underset{z_{j}\in T \setminus \{z\}}{\prod}\tanh\left(\frac{\bm{l}^{\left( 1\right)}(z_{j})}{2}\right)\right)\right)\\
        &=\left(-1\right)^{\bm{c}_{0}\left(z\right)}\bar{\bm{l}}^{\left(1\right)}\left(z\right)\text{.}
    \end{split}
    \end{equation}
\end{proof}
\begin{definition}
    A memoryless channel $W:\left\{0,1 \right\}\rightarrow\mathcal{W}$ is a \gls{bms} channel if there is a permutation $\pi$ of the output alphabet $\mathcal{W}$ such that $\pi^{-1}=\pi$ and $W\left(x\mid 1\right)=W\left(\pi\left(x\right)\mid 0\right)\text{ }\forall x\in \mathcal{W}$~\cite{RPA}. 
    \label{def:BMS_channel}
\end{definition}
\begin{prop}
    Let $W:\left\{0,1\right\}\rightarrow\mathcal{W}$ be a \gls{bms} channel. 
    Let $\bm{c}_{1}$ and $\bm{c}_{2}$ be two codewords of the $\text{RM}\left(m,r\right)$ code.
    Let $\mathsf{Y}_{1}$ and $\mathsf{Y}_{2}$ be the (random) channel outputs of transmitting $\bm{c}_{1}$ and $\bm{c}_{2}$ over $n=2^{m}$ independent copies of $W$, respectively.
    Let $\bm{l}^{\left(1\right)}$ and $\bm{l}^{\left(2\right)}$ be the \gls{llr} vectors corresponding to $\mathsf{Y}_{1}$ and $\mathsf{Y}_{2}$, respectively. 
    Then, for any $\bm{c}_{1},\bm{c}_{2}\in\text{RM}\left(m,r\right)$, we have
    \begin{equation}
    \begin{split}
        &\mathbbm{P}\left(\mathrm{\gls{cpa}}\left(\bm{l}^{\left(1\right)}, m, r, N_{\text{max}}, \theta\right)\neq \bm{c}_{1}\right)\\
        &= \mathbbm{P}\left(\mathrm{\gls{cpa}}\left(\bm{l}^{\left(2\right)}, m, r, N_{\text{max}}, \theta\right)\neq \bm{c}_{2}\right)\text{.}
    \end{split}
    \end{equation}
    \label{prop:symmetric_CPA}
\end{prop}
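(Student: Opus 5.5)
Following the route of the analogous result for \gls{rpa} decoding in~\cite{RPA}, the plan is to reduce Proposition~\ref{prop:symmetric_CPA} to Lemma~\ref{lemma:symmetric_decoded_llr} through a coupling of the two channel outputs. By linearity it suffices to compare any codeword $\bm{c}_{2}$ with $\bm{c}_{1}$. Set $\bm{c}_{0}:=\bm{c}_{1}\oplus\bm{c}_{2}$; this is again a codeword of $\text{RM}\left(m,r\right)$. Using the involution $\pi$ from Definition~\ref{def:BMS_channel}, define the map $\Pi_{\bm{c}_{0}}$ on $\mathcal{W}^{n}$ that applies $\pi$ to coordinate $z$ exactly when $\bm{c}_{0}(z)=1$ and leaves the other coordinates unchanged.

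The first step is to show that $\Pi_{\bm{c}_{0}}(\mathsf{Y}_{1})$ has the same distribution as $\mathsf{Y}_{2}$. The channel is memoryless, so this can be checked coordinate by coordinate. Where $\bm{c}_{0}(z)=0$, the inputs agree and nothing changes. Where $\bm{c}_{0}(z)=1$, the relations $W(x\mid 1)=W(\pi(x)\mid 0)$ and $\pi^{-1}=\pi$ give $W(\pi(x)\mid b)=W(x\mid b\oplus 1)$ for both values of $b$. Consequently $\mathbbm{P}\left(\Pi_{\bm{c}_{0}}(\mathsf{Y}_{1})=\bm{y}\right)=\mathbbm{P}\left(\mathsf{Y}_{2}=\bm{y}\right)$.

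The second step is to track the \gls{llr}s under this coupling. From the definition~\eqref{eqn:def_llr}, applying $\pi$ to a single output negates its \gls{llr}, because the numerator and denominator swap. Hence the \gls{llr} vector of $\Pi_{\bm{c}_{0}}(\bm{y})$ equals $(-1)^{\bm{c}_{0}(z)}\bm{l}^{(1)}(z)$, which is exactly the hypothesis~\eqref{eqn:sym_LLR}.

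Lemma~\ref{lemma:symmetric_decoded_llr} then gives, pathwise, that the CPA output on the transformed \gls{llr} vector equals the CPA output on $\bm{l}^{(1)}$ plus $\bm{c}_{0}$. Therefore CPA errs on $\bm{l}^{(1)}$, meaning its output differs from $\bm{c}_{1}$, if and only if CPA on the transformed vector differs from $\bm{c}_{1}\oplus\bm{c}_{0}=\bm{c}_{2}$. Since the transformed vector has the law of $\bm{l}^{(2)}$, the two error probabilities are equal.

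The main obstacle is that the lemma is stated for a fixed iteration structure, while the proposition also involves the iteration count and the early-stopping threshold $\theta$. To handle this, I will extend the lemma's inductive argument across iterations. Symmetry~\eqref{eqn:sym_LLR} is preserved by every aggregation step, and under the broadcast update~\eqref{eqn:broadcastllrupdate} this carries forward to the next iteration. The L2-norm stopping rule depends only on the magnitudes of differences between consecutive iterates, and these are invariant under the sign flips $(-1)^{\bm{c}_{0}(z)}$. So both runs stop at the same iteration and their outputs differ by exactly $\bm{c}_{0}$.

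A minor point concerns FHT ties. Lemma~\ref{lemma:symmetric_decoded_llr} assumes that the maximizer is unique or that ties are broken consistently with the shift. Ties have probability zero for the \gls{biawgn} channel, so I would record this as an almost-sure statement, which is enough for the probabilities to agree.
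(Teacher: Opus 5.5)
Your proposal is correct and follows the paper's proof essentially step for step: you build the coordinatewise involution $\pi^{\bm{c}_{0}}$ from $\bm{c}_{0}=\bm{c}_{1}+\bm{c}_{2}$, get the sign flip $(-1)^{\bm{c}_{0}(z)}$ of the LLRs, apply Lemma~\ref{lemma:symmetric_decoded_llr}, and change variables using $W^{n}(\bm{y}\mid\bm{c}_{1})=W^{n}(\pi^{\bm{c}_{0}}(\bm{y})\mid\bm{c}_{2})$. Your treatment of the iteration count, the norm-based stopping rule, and ties fills in points the paper skips; note, however, that for BMS channels with atoms (e.g., the BSC) ties occur with positive probability, so an almost-sure argument is not enough there and you need a tie-breaking rule that commutes with adding $\bm{c}_{0}$, such as choosing uniformly among the maximizers (a gap the paper's proof shares).
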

\begin{proof}
    In this proof, we apply an identical strategy as in~\cite{RPA}.
    Because $W$ is a \gls{bms} channel, there exists a permutation $\pi$ that satisfies two conditions in the Definition~\ref{def:BMS_channel}.
    Let $\bm{c}_{1}$ and $\bm{c}_{2}$ be two codeword of $\text{RM}\left(m,r\right)$ codes, and $\bm{c}_{0}=\bm{c}_{1}+\bm{c}_{2}$ is also a codeword of $\text{RM}\left(m,r\right)$ codes.
    Both channel outputs $\bm{y}_{1}$ and $\bm{y}_{2}$ belong to $\mathcal{W}^{n}$.
    Define a permutation $\pi^{\bm{c}_{0}}$ on $\mathcal{W}^{n}$: For any $\bm{y}=\left(\bm{y}\left( z\right), z\in\mathbbm{E}\right)\in\mathcal{W}^{n}$,
    \begin{equation}
        \pi^{\bm{c}_{0}}\left( \bm{y}\right) = \left(\pi^{\bm{c}_{0}}\left( \bm{y}\left(z\right)\right),z\in\mathbbm{E} \right)\text{.}
    \end{equation}
    The code bit $\bm{c}_{0}\left(z\right)\in \left\{0,1 \right\}$, and $\pi^{\bm{c}_{0}}$ is the identity map.
    Because $\pi$ is a permutation on $\mathcal{W}^{n}$, $\pi^{\bm{c}_{0}}$ is clearly a permutation on $\mathcal{W}^{n}$.

    For a given $\bm{y}=\left(\bm{y}\left(z\right),z\in\mathbbm{E}\right)$, the corresponding \gls{llr} vector is $\bm{l}_{y}^{\left(1 \right)}:=\left(\bm{l}_{y}^{\left(1 \right)}\left(z\right),z\in\mathbbm{E}\right)$ (i.e., $\bm{l}_{y}^{\left(1 \right)}\left(z\right)=\text{LLR}\left(\bm{y}\left(z\right)\right)\;\forall z\in\mathbbm{E}$), and the corresponding LLR vector of $\pi^{\bm{c}_{0}}\left( \bm{y}\right)$ is $\bm{l}_{y}^{\left(2 \right)}:=\left(\bm{l}_{y}^{\left(2 \right)}\left(z\right),z\in\mathbbm{E}\right)$ (i.e., $\bm{l}_{y}^{\left(2 \right)}\left(z\right)=\text{LLR}\left(\pi^{\bm{c}_{0}\left(z\right)}\left(\bm{y}\left(z\right)\right)\right)\;\forall z\in\mathbbm{E}$).
    By the property of Definition~\ref{def:BMS_channel}, we have
    \begin{equation}
        \bm{l}_{y}^{\left(2\right)} \left( z \right) = \left( -1\right)^{\bm{c}_{0}\left(z\right)} \bm{l}_{y}^{\left(1\right)}\!\left( z\right)\quad\forall z\in\mathbbm{E}\text{.}
    \end{equation}

    Since $\bm{c}_{0}\in\text{RM}\left(m,r\right)$, by Lemma~\ref{lemma:symmetric_decoded_llr}, we have
    \begin{equation}
        \text{\gls{cpa}}\!\left( \bm{l}^{(1)}_{y}, m, r, N_{\text{max}},\theta\right) = \text{\gls{cpa}}\!\left( \bm{l}^{(2)}_{y}, m, r, N_{\text{max}}, \theta\right) + \bm{c}_{0}\text{.}
    \end{equation}
    Hence, $\text{\gls{cpa}}\!\left( \bm{l}^{(1)}_{y}, m, r, N_{\text{max}}, \theta\right)\neq\bm{c}_{1}$ if and only if $\text{\gls{cpa}}\!\left( \bm{l}^{(2)}_{y}, m, r, N_{\text{max}}, \theta\right)\neq\bm{c}_{2}$.

    We use $W^{n}\left( \bm{y}\mid \bm{c} \right)$ to denote the probability of receiving $\bm{y}\in\mathcal{W}^{n}$ when the transmitted codeword is $\bm{c}$.
    By the property of $\pi$, we can see that
    \begin{equation}
        W^{n}\left( \bm{y}\mid \bm{c}_{1} \right) = W^{n}\left( \pi^{\bm{c}_{0}}\left(\bm{y}\right)\mid \bm{c}_{2} \right) \quad \forall \bm{y}\in\mathcal{W}^{n}\text{.}
    \end{equation}

    Vectors $\bm{l}^{\left(1\right)}$ and $\bm{l}^{\left(2\right)}$ denote the random \gls{llr} vectors corresponding to random channel outputs when transmitting $\bm{c}_{1}$ and $\bm{c}_{2}$, respectively.
    Then,
    \begin{equation}
    \begin{split}
        &\mathbbm{P}\left(\text{\gls{cpa}}\!\left(\bm{l}^{\left(1\right)}, m, r, N_{\text{max}}, \theta\right)\neq \bm{c}_{1}\right)\\
        &=\sum_{\bm{y}\in\mathcal{W}^{n}}W^{n}\left( \bm{y}\mid \bm{c}_{1}\right)\mathbbm{1}\!\left(  \text{\gls{cpa}}\!\left( \bm{l}^{(1)}_{y}, m, r, N_{\text{max}},\theta\right)\neq\bm{c}_{1}\right)\\
        &=\sum_{\bm{y}\in\mathcal{W}^{n}}W^{n}\left( \pi^{\bm{c}_{0}}\left(\bm{y}\right)\mid \bm{c}_{2}\right)\mathbbm{1}\!\left(  \text{\gls{cpa}}\!\left( \bm{l}^{(2)}_{y}, m, r, N_{\text{max}},\theta\right)\neq\bm{c}_{2}\right)\\
        &=\mathbbm{P}\left(\text{\gls{cpa}}\!\left(\bm{l}^{\left(2\right)}, m, r, N_{\text{max}}, \theta\right)\neq \bm{c}_{2}\right)\text{.}
    \end{split}
    \end{equation}
\end{proof}

\section{Density Evolution of \gls{cpa} Decoding}
\label{sec:DE_CPA}
In this section, we first analyze the density function returned from the projection function and the \gls{fht} decoding for \gls{cpa} decoding.
As the analytical performance of the soft-decision \gls{fht} decoding under the channel condition returned from the projection function is hard to derive, we first approximate the \gls{awgn} channel as a \gls{bsc} to simplify the analysis and use a hard-decision decoding to approximate the \gls{fht} decoding.
Then, we analyze the density function returned from the aggregation.
Lastly, the density evolution is constructed.

\subsection{Approximation in the Projection and the \gls{fht} Decoding}
Because the exact \gls{ml} decoding (i.e., \gls{fht} decoding) performance of the $\text{RM}\left(m-r+1,1\right)$ sub-codes is hard to analytically derive, we take the following simplification in this work.
We assume a \gls{bpsk} modulation, an \gls{awgn} channel and that our \gls{fht} decoding uses the hard-decision input.
We use the \gls{bsc} to approximate the \gls{awgn} channel by setting the crossover probability of the \gls{bsc} to
\begin{equation}
    p=\int_{-\infty}^{0}\mathrm{p}\left(\bm{l}\left(z\right)\mid \bm{c}\left(z\right)=0\right) \dd{\bm{l}\left(z\right)}\text{.}
    \label{eqn:bit_error_prob}
\end{equation}

Based on the projection function~\eqref{eqn:cpallrproj}, the projected \gls{llr} is negative if there is an odd number of received \glspl{llr} $\bm{l}<0$.
Hence, the probability of a projected \gls{llr} $<0$ is
\begin{equation}
\begin{split}
    &\sum_{i\in\mathcal{I}} \binom{2^{r-1}}{i} p^{i}\left(1-p\right)^{2^{r-1}-i}\\
    &=\frac{1}{2}\left(\left(\sum_{i=0}^{2^{r-1}}\binom{2^{r-1}}{i} p^{i}\left(1-p\right)^{2^{r-1}-i}\right)-\right.\\
    &\left.\left(\sum_{i=0}^{2^{r-1}}(-1)^{i}\binom{2^{r-1}}{i} p^{i}\left(1-p\right)^{2^{r-1}-i}\right)\right)\\
    &\stackrel{(a)}{=}\frac{1}{2}\left(\left(\left(1-p\right)+p\right)^{2^{r-1}} - \left(\left(1-p\right)-p\right)^{2^{r-1}}\right)\\
    &=\frac{1}{2}\left(1-\left(1-2p\right)^{2^{r-1}}\right)=\Bar{p}\text{,}
\end{split}
    \label{eqn:BSC_approx_proj_prob}
\end{equation}
where $\mathcal{I}:=\left\{1,3,...,2^{r-1}-1\right\}$ is the set of odd numbers, and the equality $(a)$ holds because of the binomial theorem~\cite{roy2021series}.

The upper bounds of the error probability of the \gls{fht} decoding under the \gls{bsc} and the \gls{bms} channel are derived in~\cite{rameshwar2024upper,fathollahi2026error}, but the tightness of these bounds is not provided, and we do not use these bounds in this work.
For the \gls{bsc}, an efficient decoding algorithm for \gls{rm} codes with an order $r=o\left(\sqrt{m}\right):=\lim_{m\rightarrow\infty}\tfrac{r}{\sqrt{m}}=0$ is proposed to correct error patterns with a weight up to $\left(\frac{1}{2}-o\left(1\right)\right)n$~\cite{saptharishi2016efficiently,saptharishi2017efficiently}.
For $\text{RM}\left(m-r+1,1\right)$ codes that can be decoded by \gls{cpa} decoding ($2\leq r<m$), 
\begin{align*}
    \lim_{m\rightarrow\infty}\tfrac{1}{\sqrt{m-r+1}}=\lim_{m\rightarrow\infty}\tfrac{1}{\sqrt{m\left(1-r/m\right)+1}}=0
\end{align*}
because $0<\left(1-r/m\right)<1$.
Hence, $\frac{n}{2}-1$ errors for order-$r=1$ \gls{rm} codes can be corrected because $1=\tfrac{1}{2^{m}}n$ and $\lim_{m\rightarrow\infty}\tfrac{1}{2^{m}}=0=o\left(1\right)$.
Hence, the probability of the correct decoding is
\begin{equation}
    \mathbbm{P}\left(\hat{\bm{y}}_{/\mathbbm{B}_{i}}=\bm{0}\right) = \sum_{i=0}^{2^{m-r}-1}\binom{2^{m-r}}{i}\Bar{p}^{i}\left(1-\Bar{p}\right)^{2^{m-r+1}-i}\text{.}
    \label{eqn:upper_bound_error_bsc}
\end{equation}
As this efficient decoding algorithm will not have better decoding performance than the \gls{ml} decoding (\gls{fht} decoding), we can safely use it to approximate the decoding performance (an upper bound that will be explained in the next subsection) of the \gls{fht} decoding in the analysis instead of the actual decoding.

\subsection{The Density Function Returned from the Aggregation}
In this work, we propose the following mathematical model for the aggregation function.
We define the output from the aggregation function as
\begin{equation}
\begin{split}
    u&:=(-1)^{\hat{\bm{y}}_{/\mathbbm{B}_{i}}\!\left(T\right)} \times 2\tanh^{-1}\left(\underset{{z}_{i}\in T \setminus \{z\}}{\prod}\tanh\left(\frac{\bm{l}(z_{i})}{2}\right)\right)\\
    &=(-1)^{\hat{\bm{y}}_{/\mathbbm{B}_{i}}\!\left(T\right)} \times l\text{,}
\end{split}
\label{eqn:aggr_var_def}
\end{equation}
and we assume the random variable, which is a \gls{llr}, 
\begin{equation}
    l:=2\tanh^{-1}\left(\underset{{z}_{i}\in T \setminus \{z\}}{\prod}\tanh\left(\frac{\bm{l}(z_{i})}{2}\right)\right)
    \label{eqn:def_aggr_without_fht}
\end{equation}
follows a density function $\mathrm{p}\left(l\right)$.
Because the random variable $\hat{\bm{y}}_{/\mathbbm{B}_{i}}\!\left(T\right)$ and the random variable $l$ are not necessarily independent, the distribution of the product variable $u$ is defined by the joint distribution where $\mathrm{p}\left(u\right):=\mathrm{p}\left((-1)^{\hat{\bm{y}}_{/\mathbbm{B}_{i}}\!\left(T\right)},l\right)$ and 
\begin{equation}
\begin{split}
    &\mathrm{p}\left(u\mid (-1)^{\hat{\bm{y}}_{/\mathbbm{B}_{i}}\!\left(T\right)}\right)\\
    &=\mathrm{p}\left((-1)^{\hat{\bm{y}}_{/\mathbbm{B}_{i}}\!\left(T\right)},l\mid (-1)^{\hat{\bm{y}}_{/\mathbbm{B}_{i}}\!\left(T\right)}\right)\\
    &=\mathrm{p}\left(l\mid (-1)^{\hat{\bm{y}}_{/\mathbbm{B}_{i}}\!\left(T\right)}\right)\text{.}
\end{split}
\end{equation}
Also, the random variable $(-1)^{\hat{\bm{y}}_{/\mathbbm{B}_{i}}\!\left(T\right)}$ only changes the sign of $l$, so we have $\mathrm{p}\left(l\right) = \mathrm{p}\left(l\mid \hat{\bm{y}}_{/\mathbbm{B}_{i}}\!\left(T\right)=0\right)$, and $\mathrm{p}\left(-l\right) = \mathrm{p}\left(l\mid \hat{\bm{y}}_{/\mathbbm{B}_{i}}\!\left(T\right)=1\right)$.
Hence, we have the following:
\begin{equation}
\begin{split}
    \mathrm{p}\left(u\right) =\begin{cases}
        \mathrm{p}\left(l\right)\mathbbm{P}\left(\hat{\bm{y}}_{/\mathbbm{B}_{i}}\!\left(T\right)=0\right)\text{,}&\text{ if }\hat{\bm{y}}_{/\mathbbm{B}_{i}}\!\left(T\right)=0\text{,}\\
        \mathrm{p}\left(-l\right)\mathbbm{P}\left(\hat{\bm{y}}_{/\mathbbm{B}_{i}}\!\left(T\right)=1\right)\text{,}&\text{ if }\hat{\bm{y}}_{/\mathbbm{B}_{i}}\!\left(T\right)=1\text{.}\\
    \end{cases}
    \end{split}
    \label{eqn:product_of_vars_aggreagtions_prob}
\end{equation}

The expected value of $u$ is defined as
\begin{equation}
\begin{split}
    &\mathbbm{E}\left[u\right]\\
    &=  \int_{-\infty}^{+\infty}\sum_{\hat{\bm{y}}_{/\mathbbm{B}_{i}}\!\left(T\right)=0}^{1}\mathrm{p}\left(u\right)\left(-1\right)^{\hat{\bm{y}}_{/\mathbbm{B}_{i}}\!\left(T\right)}l\dd{l}\\
    &= \int_{-\infty}^{+\infty}\sum_{\hat{\bm{y}}_{/\mathbbm{B}_{i}}\!\left(T\right)=0}^{1}\mathrm{p}\left((-1)^{\hat{\bm{y}}_{/\mathbbm{B}_{i}}\!\left(T\right)},l\right)\left(-1\right)^{\hat{\bm{y}}_{/\mathbbm{B}_{i}}\!\left(T\right)}l\dd{l}\\
    &=\int_{-\infty}^{+\infty} \mathrm{p}\left(l\right)\mathbbm{P}\left(\hat{\bm{y}}_{/\mathbbm{B}_{i}}\!\left(T\right)=0\right) l \dd{l} +\\
    &\int_{-\infty}^{+\infty} \mathrm{p}\left(-l\right)\mathbbm{P}\left(\hat{\bm{y}}_{/\mathbbm{B}_{i}}\!\left(T\right)=1\right) \left(-l\right) \dd{l}\\
    &= \mathbbm{P}\left(\hat{\bm{y}}_{/\mathbbm{B}_{i}}\!\left(T\right)=0\right) \mathbbm{E}\left[l\right] + \mathbbm{P}\left(\hat{\bm{y}}_{/\mathbbm{B}_{i}}\!\left(T\right)=1\right) \mathbbm{E}\left[-l\right]\text{.}
\end{split}
\label{eqn:expected_value_u}
\end{equation}
The variance of $u$ is defined as
\begin{equation}
    \text{Var}\left[u\right] = \mathbbm{E}\left[u^{2}\right] - \left(\mathbbm{E}\left[u\right]\right)^{2}\text{,}
\end{equation}
\begin{equation}
\begin{split}
    &\mathbbm{E}\left[u^{2}\right]\\
    &= \int_{-\infty}^{+\infty} \mathrm{p}\left(l\right)\mathbbm{P}\left(\hat{\bm{y}}_{/\mathbbm{B}_{i}}\!\left(T\right)=0\right) \left(l\right)^{2} \dd{l} +\\
    &\int_{-\infty}^{+\infty} \mathrm{p}\left(-l\right)\mathbbm{P}\left(\hat{\bm{y}}_{/\mathbbm{B}_{i}}\!\left(T\right)=1\right) \left(-l\right)^{2} \dd{l}\\ 
    &=\mathbbm{P}\left(\hat{\bm{y}}_{/\mathbbm{B}_{i}}\!\left(T\right)=0\right)\mathbbm{E}\left[l^{2}\right] + \mathbbm{P}\left(\hat{\bm{y}}_{/\mathbbm{B}_{i}}\!\left(T\right)=1\right)\\
    &\left(\int_{-\infty}^{0}\mathrm{p}\left(-l\right)\left(-l\right)^{2}\dd l + \int_{0}^{\infty}\mathrm{p}\left(-l\right)\left(-l\right)^{2}\dd l\right)\\
    &=\mathbbm{P}\left(\hat{\bm{y}}_{/\mathbbm{B}_{i}}\!\left(T\right)=0\right)\mathbbm{E}\left[l^{2}\right] + \mathbbm{P}\left(\hat{\bm{y}}_{/\mathbbm{B}_{i}}\!\left(T\right)=1\right)\\
    &\left(\int_{0}^{\infty}\mathrm{p}\left(l\right)l^{2}\dd l + \int_{-\infty}^{0}\mathrm{p}\left(l\right)l^{2}\dd l\right)\\
    & = \mathbbm{P}\left(\hat{\bm{y}}_{/\mathbbm{B}_{i}}\!\left(T\right)=0\right)\mathbbm{E}\left[l^{2}\right] \\
    & + \mathbbm{P}\left(\hat{\bm{y}}_{/\mathbbm{B}_{i}}\!\left(T\right)=1\right) \mathbbm{E}\left[l^{2}\right]= \mathbbm{E}\left[l^{2}\right]\text{,}
    \end{split}
\end{equation}
and
\begin{equation}
\begin{split}
    &\left(\mathbbm{E}\left[u\right]\right)^{2}\\
    &= \left(\mathbbm{P}\left(\hat{\bm{y}}_{/\mathbbm{B}_{i}}\!\left(T\right)=0\right) \mathbbm{E}\left[l\right]\right)^{2}\\
    &+ \left(\mathbbm{P}\left(\hat{\bm{y}}_{/\mathbbm{B}_{i}}\!\left(T\right)=1\right) \mathbbm{E}\left[-l\right]\right)^{2}\\
    & + 2\mathbbm{P}\left(\hat{\bm{y}}_{/\mathbbm{B}_{i}}\!\left(T\right)=0\right) \mathbbm{E}\left[l\right]\mathbbm{P}\left(\hat{\bm{y}}_{/\mathbbm{B}_{i}}\!\left(T\right)=1\right) \mathbbm{E}\left[-l\right]\text{.}
\end{split}
\end{equation}
Hence, we have
\begin{equation}
\begin{split}
    &\text{Var}\left[u\right]\\
    &= \mathbbm{E}\left[l^{2}\right] + \left(2\mathbbm{P}\left(\hat{\bm{y}}_{/\mathbbm{B}_{i}}\!\left(T\right)=0\right) \mathbbm{P}\left(\hat{\bm{y}}_{/\mathbbm{B}_{i}}\!\left(T\right)=1\right) \right.\\
    &\left.-\mathbbm{P}^{2}\left(\hat{\bm{y}}_{/\mathbbm{B}_{i}}\!\left(T\right)=0\right)-\mathbbm{P}^{2}\left(\hat{\bm{y}}_{/\mathbbm{B}_{i}}\!\left(T\right)=1\right)\right)\left(\mathbbm{E}\left[l\right]\right)^{2}\text{.}
\end{split}
\end{equation}
By~\eqref{eqn:def_ml_fht}, we know that $\hat{\bm{y}}_{/\mathbbm{B}_{i}}\!\left(T\right)$ maximize the likelihood function, and we define success rate of the \gls{ml} decoding as $\mathbbm{P}\left(\hat{\bm{y}}_{/\mathbbm{B}_{i}}=\bm{0}=\bm{c}_{0}\right)$ where $\bm{0}$ is an all-zeros vector and we denote the all-zeros codeword as $\bm{c}_{0}$.

Given all codewords are \gls{iid}, we define
\begin{equation}
    \begin{split}
        &\mathbbm{P}\left(\hat{\bm{y}}_{/\mathbbm{B}_{i}}\!\left(T\right)=0\right)\\
        &= \mathbbm{P}\left(\bigcup_{\hat{\bm{y}}\in\mathcal{C}_{0}}\hat{\bm{y}}\right)\\
        &=\sum_{\hat{\bm{y}}\in\mathcal{C}_{0}} \mathbbm{P}\left(\hat{\bm{y}}\right)\geq \mathbbm{P}\left(\hat{\bm{y}}_{/\mathbbm{B}_{i}}=\bm{0}\right)\text{,}
    \end{split}
\end{equation}
where $\mathcal{C}_{0}:=\left\{\hat{\bm{y}}\in\text{\gls{rm}}\left(m-r+1,1\right)\mid \hat{\bm{y}}\left(T\right)=0\right\}$,
and
\begin{equation}
    \begin{split}
        &\mathbbm{P}\left(\hat{\bm{y}}_{/\mathbbm{B}_{i}}\!\left(T\right)=1\right)\\ 
        &= \mathbbm{P}\left(\bigcup_{\hat{\bm{y}}\in\mathcal{C}_{1}}\hat{\bm{y}}\right)\\
        &=\sum_{\hat{\bm{y}}\in\mathcal{C}_{1}} \mathbbm{P}\left(\hat{\bm{y}}\right)\leq\mathbbm{P}\left(\hat{\bm{y}}_{/\mathbbm{B}_{i}}\neq \bm{0}\right)\text{,}
    \end{split}
\end{equation}
where $\mathcal{C}_{1}:=\left\{\hat{\bm{y}}\in\text{\gls{rm}}\left(m-r+1,1\right)\mid \hat{\bm{y}}\left(T\right)=1\right\}$.
By only considering the event that the \gls{ml} decoding successfully decodes (i.e., $\hat{\bm{y}}_{/\mathbbm{B}_{i}}=\bm{0}$), and we can see that
\begin{equation}
\begin{split}
    \mathbbm{E}\left[u\right]&\geq \mathbbm{P}\left(\hat{\bm{y}}_{/\mathbbm{B}_{i}}=\bm{0}\right) \mathbbm{E}\left[l\right] + \mathbbm{P}\left(\hat{\bm{y}}_{/\mathbbm{B}_{i}}\neq \bm{0}\right) \mathbbm{E}\left[-l\right]\\
    &=\mu'\text{.}
\end{split}
    \label{eqn:sum_llr_mean}
\end{equation}
As the decoding performance of the efficient hard-decision decoding is worse than the decoding performance of the \gls{fht} decoding, the mean $\mu'$ will be larger when \gls{fht} decoding is used, and the error rate of using hard-decision decoding~\cite{saptharishi2016efficiently,saptharishi2017efficiently} is an upper bound on the error rate of using \gls{fht} decoding.

Also, an upper bound of $\text{Var}\left[u\right]$ can be constructed by
\begin{equation}
    \begin{split}
        &\text{Var}\left[u\right]\\
        &= \mathbbm{E}\left[l^{2}\right] + \left(2\mathbbm{P}\left(\hat{\bm{y}}_{/\mathbbm{B}_{i}}\!\left(T\right)=0\right) \mathbbm{P}\left(\hat{\bm{y}}_{/\mathbbm{B}_{i}}\!\left(T\right)=1\right) \right.\\
        &\left.-\mathbbm{P}^{2}\left(\hat{\bm{y}}_{/\mathbbm{B}_{i}}\!\left(T\right)=0\right)-\mathbbm{P}^{2}\left(\hat{\bm{y}}_{/\mathbbm{B}_{i}}\!\left(T\right)=1\right)\right)\left(\mathbbm{E}\left[l\right]\right)^{2}\\
        &= \mathbbm{E}\left[l^{2}\right]+ \left(\mathbbm{P}\left(\hat{\bm{y}}_{/\mathbbm{B}_{i}}\!\left(T\right)=0\right)\right.\\
        &\left.\left(2\mathbbm{P}\left(\hat{\bm{y}}_{/\mathbbm{B}_{i}}\!\left(T\right)=1\right)-\mathbbm{P}\left(\hat{\bm{y}}_{/\mathbbm{B}_{i}}\!\left(T\right)=0\right)\right)\right.\\
        &\left.-\mathbbm{P}^{2}\left(\hat{\bm{y}}_{/\mathbbm{B}_{i}}\!\left(T\right)=1\right)\right)\left(\mathbbm{E}\left[l\right]\right)^{2}\\
        &\leq\mathbbm{E}\left[l^{2}\right]+\mathbbm{P}\left(\hat{\bm{y}}_{/\mathbbm{B}_{i}}\!\left(T\right)=0\right)\\
        &\left(2\mathbbm{P}\left(\hat{\bm{y}}_{/\mathbbm{B}_{i}}\!\left(T\right)=1\right)-\mathbbm{P}\left(\hat{\bm{y}}_{/\mathbbm{B}_{i}}\!\left(T\right)=\bm{0}\right)\right)\left(\mathbbm{E}\left[l\right]\right)^{2}\\
        &\leq\mathbbm{E}\left[l^{2}\right] +\mathbbm{P}\left(\hat{\bm{y}}_{/\mathbbm{B}_{i}}\!\left(T\right)=0\right)\\
        &\left(2\mathbbm{P}\left(\hat{\bm{y}}_{/\mathbbm{B}_{i}}\neq \bm{0}\right)-\mathbbm{P}\left(\hat{\bm{y}}_{/\mathbbm{B}_{i}}=\bm{0}\right)\right)\left(\mathbbm{E}\left[l\right]\right)^{2}\\
        &\leq\mathbbm{E}\left[l^{2}\right] +\\
        &\left(\begin{cases}
            \mathbbm{P}\left(\hat{\bm{y}}_{/\mathbbm{B}_{i}}=\bm{0}\right)\text{ if }f<0\text{,}\\
            1\text{ if }f\geq0\text{.}\\
        \end{cases}\right)\\
        &\left(2\mathbbm{P}\left(\hat{\bm{y}}_{/\mathbbm{B}_{i}}\neq \bm{0}\right)-\mathbbm{P}\left(\hat{\bm{y}}_{/\mathbbm{B}_{i}}=\bm{0}\right)\right)\left(\mathbbm{E}\left[l\right]\right)^{2}=\left(\sigma'\right)^{2}\text{,}
    \end{split}
    \label{eqn:sum_llr_var}
\end{equation}
where $f:=\left(2\mathbbm{P}\left(\hat{\bm{y}}_{/\mathbbm{B}_{i}}\neq \bm{0}\right)-\mathbbm{P}\left(\hat{\bm{y}}_{/\mathbbm{B}_{i}}=\bm{0}\right)\right)$.
As the decoding performance of the efficient hard-decision decoding is worse than the decoding performance of \gls{fht} decoding, the variance $\left(\sigma'\right)^{2}$ will be smaller when \gls{fht} decoding is used.
The error rate of using hard-decision decoding~\cite{saptharishi2016efficiently,saptharishi2017efficiently} is an upper bound on the error rate of using \gls{fht} decoding.

The last step of the aggregation is to sum the estimations from all different subspaces.
In~\cite{richardson2001deldpc,richardson2008modern}, the density function of the summation of (independent) random variables in the $l$-domain is derived by the convolution operation because the incoming messages have different check-node degrees; hence, the density functions are different.
For \gls{cpa} decoding, regardless of the subspace, the final estimation recovers the marginal probability according to~\eqref{eqn:aggr_marginalprob}; the same check-node operations (with the same degree) are applied to different subspaces; so the density function is the same for all subspaces.
Hence, instead of using the convolution operation, we assume that all subspaces are independent and use the central limit theorem to determine the density function of $\bar{\bm{l}}=\tfrac{\hat{\bm{l}}}{n_{\mathbbm{B}}} $:
\begin{equation}
     \sqrt{ n_{\mathbbm{B}}}\left(\bar{\bm{l}}(z)-\mathbbm{E}\left[u\right]\right) \sim \mathcal{N}\left(0,\text{Var}\left[u\right]\right)\text{,}
\end{equation}
where $\mathcal{N}\left(A,B\right)$ is the density function of a Gaussian distribution with a mean $A$ and a variance $B$;
\begin{equation}
    \sqrt{ n_{\mathbbm{B}}}\bar{\bm{l}}\left(z\right) \sim \mathcal{N}\left(\sqrt{n_{\mathbbm{B}}}\mathbbm{E}\left[u\right],\text{Var}\left[u\right]\right)\text{;}
\end{equation}
and
\begin{equation}
    \bar{\bm{l}}\left(z\right) \sim \mathcal{N}\left(\mathbbm{E}\left[u\right],\frac{\text{Var}\left[u\right]}{n_{\mathbbm{B}}}\right)\text{.}
\end{equation}
Therefore, the error probability is defined as
\begin{equation}
\begin{split}
    &\int_{-\infty}^{0}\mathrm{p}\left(\bar{\bm{l}}\left(z\right)\right) \dd{\bar{\bm{l}}\left(z\right)}\leq  \int_{-\infty}^{0}\mathcal{N}\left(\mu',\frac{\left(\sigma'\right)^{2}}{n_{\mathbbm{B}}}\right) \left(\bar{\bm{l}}\left(z\right)\right)\dd{\bar{\bm{l}}\left(z\right)}\text{.}
\end{split}
\end{equation}
Because only an upper bound is derived instead of the exact probability density function, in the analysis of this work, we assume that the \gls{llr} fed to the next iteration has a probability density function 
\begin{equation}
    \mathrm{p}\left(\bm{l}\left(z\right)\mid \bm{c}\left(z\right)=0\right)=\mathcal{N}\left(\mu',\frac{\left(\sigma'\right)^{2}}{n_{\mathbbm{B}}}\right)\left(\bm{l}\left(z\right)\right)\text{.}
    \label{eqn:llr_dist_next_iter}
\end{equation}
\subsection{Error Decay Rate of \gls{cpa} Decoding}
Given that the decoded soft information followed a Gaussian distribution, it is also interesting to know how fast the error rate decays in \gls{cpa} decoding.
We define another random variable $\mathsf{V}$ where its realization $v= \bm{l}\left(z\right) - \mu'$, we also assume that we are working on cases where $\mu'>0$, and then we have
\begin{equation}
    \begin{split}
    &\mathbbm{P}\left(\bm{l}\left(z\right)\leq0\right)\\
    &=\mathbbm{P}\left(\mathsf{V}\leq-\mu'\right)\\
    &=\int_{-\infty}^{-\mu'}\mathcal{N}\left(0,\frac{1}{n_{\mathbbm{B}}}\left(\sigma'\right)^{2}\right) \left(v\right)\dd{v}\\
    &=\int_{\mu'}^{\infty}\mathcal{N}\left(0,\frac{1}{n_{\mathbbm{B}}}\left(\sigma'\right)^{2}\right) \left(v\right)\dd{v}\\
    &=\frac{1}{\sqrt{2\pi}} \frac{\sqrt{n_{\mathbbm{B}}}}{\sigma'}\int_{\mu'}^{\infty}\exp\left(-\frac{v^{2}n_{\mathbbm{B}}}{2\left(\sigma'\right)^{2}}\right) \dd{v}\\
    &\stackrel{(a)}{\leq} \frac{1}{\sqrt{2\pi}} \frac{\sqrt{n_{\mathbbm{B}}}}{\sigma'}\int_{\mu'}^{\infty}\frac{v}{\mu'} \exp\left(-\frac{v^{2}n_{\mathbbm{B}}}{2\left(\sigma'\right)^{2}}\right)\dd{v}\\
    &=\frac{1}{\sqrt{2\pi}} \frac{\sqrt{n_{\mathbbm{B}}}}{\sigma'} \frac{1}{\mu'} \frac{\left(\sigma'\right)^{2}}{n_{\mathbbm{B}}} \exp\left(-\frac{\left(\mu'\right)^{2}n_{\mathbbm{B}}}{2\left(\sigma'\right)^{2}}\right)\\
    &=\frac{\sigma'}{\sqrt{2\pi}\mu'\sqrt{n_{\mathbbm{B}}}} \exp\left(-\frac{\left(\mu'\right)^{2}n_{\mathbbm{B}}}{2\left(\sigma'\right)^{2}}\right)\text{,}
    \end{split}
    \label{eqn:upper_bound_pa_bit_error_rate}
\end{equation}
where the inequity $(a)$ holds because $\tfrac{v}{\mu'}\geq1$~\cite{upper_bound_normal_distribution}.
Hence, we can conclude that, for a positive $\mu'$ and a bounded $\sigma'$, the error rate of \gls{cpa} decoding decays as 
\begin{equation}
    \propto \frac{1}{\sqrt{n_{\mathbbm{B}}}}\exp\left(-n_{\mathbbm{B}}\right)\text{.}
    \label{eqn:error_decay_num_subspaces}
\end{equation}
Also, the error rate decreases as $\mu'$ increases and $\sigma'$ decreases.

\subsection{Density Evolution and Numerical Simulations}
According to the exact marginal probability shown in~\eqref{eqn:aggr_marginalprob} and the symmetry (Proposition~\ref{prop:symmetric_CPA}), density evolution can analyze the density function of \gls{cpa} decoding.
We denote the convolution in the $g$-domain and then convert back to $l$-domain as $a\circledast b$ (i.e., check node update)~\cite{richardson2001deldpc,richardson2008modern}, where $g=\left(\text{lg}\left(l\right),-\ln\left(\left|\tanh\left(l\right)\right|\right)\right)$, $\text{lg}\left(l\right)=1$ if $l<0$, $\text{lg}\left(l\right)=0$ if $l>0$, and $\text{lg}\left(l\right)$ is $0$ or $1$ with a equal probability when $l=0$~\cite{richardson2001deldpc}.
Without loss of generality, assume all-zeros codewords are transmitted, and the $l$-density (i.e., the density function of the \gls{llr}) under the equiprobable and \gls{iid} source, the \gls{awgn} channel, and the \gls{bpsk} modulation is
\begin{equation}
    \begin{split}
     \mathbbm{P}\left(\bm{l}\left(z\right)\mid \bm{c}\left(z\right)=0\right)=\frac{2}{\sigma^{2}}\mathcal{N}\left(1, \sigma^{2}\right)= \mathcal{N}\left(\frac{2}{\sigma^{2}}, \frac{4}{\sigma^{2}}\right)\text{.}
    \end{split}
    \label{eqn:received_l_density}
\end{equation}
Hence, according to~\cite[Sec. 4.5.2]{richardson2008modern}, the results of the inverse hyperbolic tangent part of the aggregation function follow a distribution of $a_{i}^{\circledast 2^{r-1}-1}\left(l\right)$.
As shown in Section~\ref{sec:keyprop_CPA}, the probability of the correct decoding ($\hat{\bm{y}}_{/\mathbbm{B}_{i}}=\bm{0}$) of the \gls{ml} decoding is $\mathbbm{P}\left(\hat{\bm{y}}_{/\mathbbm{B}_{i}}=\bm{0}\right)$.
By the definition of the~\eqref{eqn:product_of_vars_aggreagtions_prob}, the density function returned from the aggregation function is
\begin{equation}
    \begin{cases}
        a_{i}^{\circledast 2^{r-1}}\left(l\right)\mathbbm{P}\left(\hat{\bm{y}}_{/\mathbbm{B}_{i}}\!\left(T\right)=0\right)\text{, }&\text{ if }\hat{\bm{y}}_{/\mathbbm{B}_{i}}\!\left(T\right)=0\text{,}\\
        a_{i}^{\circledast 2^{r-1}}\left(-l\right)\mathbbm{P}\left(\hat{\bm{y}}_{/\mathbbm{B}_{i}}\!\left(T\right)=1\right)\text{, }&\text{ if }\hat{\bm{y}}_{/\mathbbm{B}_{i}}\!\left(T\right)=1\text{.}\\
    \end{cases}
    \label{eqn:product_of_vars_aggreagtions_prob_de}
\end{equation}
According to the model in Section~\ref{sec:keyprop_CPA}, the central limit theorem defined by the mean~\eqref{eqn:sum_llr_mean} and variance~\eqref{eqn:sum_llr_var} is enough to analyze the distribution returned from \gls{cpa} decoding.
The density function for the LLR used in the next iteration is defined in~\eqref{eqn:llr_dist_next_iter}.

The pseudo-code of the density evolution of \gls{cpa} decoding is shown in Algorithm~\ref {alg:density_evolution_PA}.
We use the \gls{rm}$\left(7,3\right)$ code at $E_\mathrm{b}/N_{0}=2.5\text{ dB}$, and the optimized and \gls{pcpa} decoding with $128$ subspaces~\cite{Op_PCPA,Li2023improvePAlist} and the broadcast update~\cite{li2024layeredCPA} to run the simulations.
We first plot the number of occurrences of the received channel \gls{llr} (iter. $0$) and the \gls{llr} returned from each iteration of the \gls{pcpa} decoding in Fig.~\ref{fig:hist_pcpa}.
We can see that occurrences of the \gls{llr} roughly follow Gaussian distributions.
Hence, we plot the probability density function of the Gaussian distribution using the sample mean and variance of the \gls{llr} in Fig.~\ref{fig:pdf_pcpa}, which is denoted as ``sim.'' in the legend.
Also, the probability density function of the Gaussian distribution using the computed mean $\mu'$ and variance $\left(\sigma'\right)^{2}$ is also plotted in Fig.~\ref{fig:pdf_pcpa}.
The probability density function of the Gaussian distribution returned from our proposed density evolution has a similar trend to the trend of the histogram in Fig.~\ref{fig:hist_pcpa} and captures the mean- and variance-reduction feature, which explains the decoding mechanism behind \gls{cpa} decoding.

For the \gls{rm}$\left(8,3\right)$ code, the histogram of the \gls{llr}s returned from the \gls{pcpa} decoding with $256$ subspaces and the broadcast update is shown in Fig.~\ref{fig:hist_pcpa_83}, and the density function returned from the empirical simulation is shown in Fig.~\ref{fig:pdf_pcpa_83}. 
The \gls{llr} values are more concentrated around the mean value compared to the histogram in Fig.~\ref{fig:hist_pcpa} and the density functions in Fig.~\ref{fig:pdf_pcpa} for \gls{rm}$(7,3)$ codes with $128$ subspaces, which partially verifies the conjecture in~\eqref{eqn:error_decay_num_subspaces} where a larger the number of subspaces implies a faster decay speed of error rates.
Based on the saturation early stopping~\cite{RPA_BP,Li2023improvePAlist}, for two consecutive iterations, the difference in the L2 norm between returned soft information will be smaller than a threshold $\theta$, and hard-decision results will be the same after a couple of iterations, which triggers the early stopping and explains the fast convergence speed.
\begin{algorithm}[t]
\label{alg:normalizedFHT}
\footnotesize
\caption{\label{alg:density_evolution_PA} Density Evolution of \gls{cpa} Decoding}
    \DontPrintSemicolon
    \SetAlgoVlined
    \SetKwData{h}{$\bm{H}$}
    \SetKwData{g}{$\bm{G}$}
    \KwIn{Initial density function $\mathcal{N}\!\left(\frac{2}{\sigma^{2}}, \frac{4}{\sigma^{2}}\right)$, Code length $n$, Order parameter $r$, number of subspaces $n_{\mathbbm{B}}$}
    \KwOut{Output density $\mathrm{p}$}
    \SetKwFunction{len}{Len}
    \SetKwFunction{clip}{Clip}
    \SetKwFunction{rightshift}{ARightShift}
    $\mathrm{p}\left(\bm{l}\left(z\right)\mid \bm{c}\left(z\right)=0\right)\leftarrow \mathcal{N}\!\left(\frac{2}{\sigma^{2}}, \frac{4}{\sigma^{2}}\right)$\;
    $p\leftarrow \eqref{eqn:bit_error_prob}$\;
    \For{$i=1:N_{\text{max}}$}{
        $\Bar{p}\leftarrow\eqref{eqn:BSC_approx_proj_prob}$\;
        $\mathbbm{P}\left(\hat{\bm{y}}_{/\mathbbm{B}_{i}}=\bm{0}\right)\leftarrow\eqref{eqn:upper_bound_error_bsc}$\;
        $\mathbbm{P}\left(\hat{\bm{y}}_{/\mathbbm{B}_{i}}\neq\bm{0}\right)\leftarrow 1-\mathbbm{P}\left(\hat{\bm{y}}=\bm{0}\right)$\;
        $\mathrm{p}\left(l\right)\leftarrow a_{i}^{\circledast 2^{r-1}}\left(l\right)$\;
        $\mathbbm{E}\left[l\right]\leftarrow \int_{-\infty}^{+\infty}\mathrm{p}\left(l\right) l \dd l$\;
        $\mathbbm{E}\left[l^{2}\right]\leftarrow \int_{-\infty}^{+\infty}\mathrm{p}\left(l\right) l^{2} \dd l$\;
        $\mu^{'}\leftarrow\eqref{eqn:sum_llr_mean}$\;
        $\left(\sigma^{'}\right)^{2}\leftarrow \eqref{eqn:sum_llr_var}$\;
        $p\leftarrow \eqref{eqn:bit_error_prob}$\;
    }
    \KwRet{$\mathrm{p}\leftarrow\eqref{eqn:llr_dist_next_iter}$}
\end{algorithm}

\begin{figure}[h!]
    \centering
    \begin{tikzpicture}
  \begin{customlegend}[legend columns=3,legend style={at={(0.0,-5.0)},anchor=south west, align=left,draw=none},
  legend entries={
  \scriptsize{iter. $0$}, \scriptsize{iter. $1$}, \scriptsize{iter. $2$}
  }
  ]
  \addlegendimage{draw=colorblindfree11_1, very thick}
  \addlegendimage{draw=colorblindfree11_11, very thick}
  \addlegendimage{draw=colorblindfree11_3, very thick}
  
  \end{customlegend}
\end{tikzpicture}
\vspace{0.0em}

    \begin{tikzpicture}
\begin{axis}[
    no markers, 
    domain=-10:10,
    xmin=-8,
    xmax=8,
    samples=100,
    axis lines=left, 
    xlabel=LLR, 
    ybar=0.0pt,
    ylabel=\footnotesize Counts,
    height=.6\columnwidth, 
    width=\columnwidth,
    ymin=-1,
    ymax = 4e5,
    bar width=4pt,
    enlargelimits=upper,
    y tick label style={
        /pgf/number format/fixed zerofill},
    cycle list = {
         {draw=colorblindfree11_1, fill=colorblindfree11_1},
        {draw=colorblindfree11_11, fill=colorblindfree11_11},
        {draw=colorblindfree11_3, fill=colorblindfree11_3}
    }
]
 \addplot table[x=x, y=y0]{data/hist_llr_pcpa_73.txt};
\addplot table[x=x, y=y1]{data/hist_llr_pcpa_73.txt};
\addplot table[x=x, y=y2]{data/hist_llr_pcpa_73.txt};
\end{axis}
\end{tikzpicture}
    \caption{Histogram of \gls{llr}s from the channel and every iteration of the \gls{pcpa} decoding for the \gls{rm}$\left(7,3\right)$ code at a $E_\mathrm{b}/N_{0}=2.5\text{ dB}$ over $10^{4}$ frames.}
    \label{fig:hist_pcpa}
\end{figure}

\begin{figure}[h!]
    \centering
\begin{tikzpicture}
  \begin{customlegend}[legend columns=2,legend style={at={(0.0,-5.0)},anchor=south west, align=left,draw=none},
  legend entries={
  \scriptsize{$\mu=3.56, \sigma=2.67$, DE, iter. $0$}, 
  \scriptsize{$\mu=3.56, \sigma=2.67$, sim., iter. $0$}, 
  \scriptsize{$\mu=1.17, \sigma=0.14$, DE, iter. $1$}, 
  \scriptsize{$\mu=1.19, \sigma=0.31$, sim, iter. $1$}, 
  \scriptsize{$\mu=0.27, \sigma=5.3e^{-3}$, DE, iter. $2$}, 
  \scriptsize{$\mu=0.32, \sigma=0.15$, sim, iter. $2$}
  },
  width=\columnwidth
  ]
  \addlegendimage{draw=colorblindfree11_1, very thick}
  \addlegendimage{draw=colorblindfree11_1, very thick, densely dashed}
  \addlegendimage{draw=colorblindfree11_11, very thick}
  \addlegendimage{draw=colorblindfree11_11, very thick, densely dashed}
  \addlegendimage{draw=colorblindfree11_3, very thick}
  \addlegendimage{draw=colorblindfree11_3, very thick, densely dashed}
  
  \end{customlegend}
\end{tikzpicture}
\vspace{0.0em}

    \begin{tikzpicture}
\begin{axis}[
    no markers, 
    domain=-10:10,
    xmin=-8,
    xmax=8,
    samples=100,
    axis lines=left, 
    xlabel=LLR, 
    ylabel=$\mathcal{N}\left(\mu\text{, }\sigma\right)$,
    height=.6\columnwidth, 
    width=\columnwidth,
    ymin=0,
    ytick={0.0,0.2,...,1.0},
    enlargelimits=upper,
    y tick label style={
        /pgf/number format/fixed zerofill},
    cycle list = {
        {draw=colorblindfree11_1, very thick},
        {draw=colorblindfree11_1, ,mark=square*, very thick, densely dashed},
        {draw=colorblindfree11_11, very thick},
        {draw=colorblindfree11_11, ,mark=square*, very thick, densely dashed},
        {draw=colorblindfree11_3, very thick},
        {draw=colorblindfree11_3, ,mark=square*, very thick, densely dashed}
    }
]
  \addplot table [x=x, y=y0] {data/de_pdf_pcpa_73.txt};
  \addplot table [x=x, y=y0] {data/pdf_pcpa_73.txt};
  \addplot table [x=x, y=y1] {data/de_pdf_pcpa_73.txt};
  \addplot table [x=x, y=y1] {data/pdf_pcpa_73.txt};
  \addplot table [x=x, y=y2] {data/de_pdf_pcpa_73.txt};
  \addplot table [x=x, y=y2] {data/pdf_pcpa_73.txt};
\end{axis}
\end{tikzpicture}
    \caption{Density functions (sim.) of the output from the \gls{pcpa} decoding over $10^{4}$ frames and the density functions (DE) returned from the density evolution of the \gls{pcpa} decoding for the \gls{rm}$\left(7,3\right)$ code at a $E_\mathrm{b}/N_{0}=2.5\text{ dB}$.}
    \label{fig:pdf_pcpa}
\end{figure}

\begin{figure}[t]
    \centering
    \begin{tikzpicture}
  \begin{customlegend}[legend columns=3,legend style={at={(0.0,-5.0)},anchor=south west, align=left,draw=none},
  legend entries={
  \scriptsize{iter. $0$}, \scriptsize{iter. $1$}, \scriptsize{iter. $2$}
  }
  ]
  \addlegendimage{draw=colorblindfree11_1, very thick}
  \addlegendimage{draw=colorblindfree11_11, very thick}
  \addlegendimage{draw=colorblindfree11_3, very thick}
  
  \end{customlegend}
\end{tikzpicture}
\vspace{0.0em}

    \begin{tikzpicture}
\begin{axis}[
    no markers, 
    domain=-10:10,
    xmin=-8,
    xmax=8,
    samples=100,
    axis lines=left, 
    xlabel=LLR, 
    ybar=0.0pt,
    ylabel=\footnotesize Counts,
    height=.6\columnwidth, 
    width=\columnwidth,
    ymin=-1,
    ymax=3e6,
    bar width=5pt,
    enlargelimits=upper,
    y tick label style={
        /pgf/number format/fixed zerofill},
    cycle list = {
         {draw=colorblindfree11_1, fill=colorblindfree11_1},
        {draw=colorblindfree11_11, fill=colorblindfree11_11},
        {draw=colorblindfree11_3, fill=colorblindfree11_3}
    }
]
\addplot table[x=x, y=y0]{data/hist_llr_pcpa_83.txt};
\addplot table[x=x, y=y1]{data/hist_llr_pcpa_83.txt};
\addplot table[x=x, y=y2]{data/hist_llr_pcpa_83.txt};
\end{axis}
\end{tikzpicture}
    \caption{Histogram of \gls{llr}s from the channel and every iteration of the \gls{pcpa} decoding for the \gls{rm}$\left(8,3\right)$ code at a $E_\mathrm{b}/N_{0}=1.5\text{ dB}$ over $10^{4}$ frames.}
    \label{fig:hist_pcpa_83}
\end{figure}

\begin{figure}[h!]
    \centering
\begin{tikzpicture}
  \begin{customlegend}[legend columns=2,legend style={at={(0.0,-5.0)},anchor=south west, align=left,draw=none},
  legend entries={
  \scriptsize{$\mu=2.05, \sigma=2.03$, DE, iter. $0$}, 
  \scriptsize{$\mu=2.05, \sigma=2.03$, sim., iter. $0$}, 
  \scriptsize{$\mu=0.36, \sigma=0.06$, DE, iter. $1$}, 
  \scriptsize{$\mu=0.37, \sigma=0.19$, sim, iter. $1$}, 
  \scriptsize{$\mu=0.016, \sigma=2.6e^{-4}$, DE, iter. $2$}, 
  \scriptsize{$\mu=0.013, \sigma=0.011$, sim, iter. $2$}
  },
  width=\columnwidth
  ]
  \addlegendimage{draw=colorblindfree11_1, very thick}
  \addlegendimage{draw=colorblindfree11_1, very thick, densely dashed}
  \addlegendimage{draw=colorblindfree11_11, very thick}
  \addlegendimage{draw=colorblindfree11_11, very thick, densely dashed}
  \addlegendimage{draw=colorblindfree11_3, very thick}
  \addlegendimage{draw=colorblindfree11_3, very thick, densely dashed}
  
  \end{customlegend}
\end{tikzpicture}
\vspace{0.0em}

    \begin{tikzpicture}
\begin{axis}[
    no markers, 
    domain=-10:10,
    xmin=-8,
    xmax=8,
    samples=100,
    axis lines=left, 
    xlabel=LLR, 
    ylabel=$\mathcal{N}\left(\mu\text{, }\sigma\right)$,
    height=.6\columnwidth, 
    width=\columnwidth,
    ymin=0,
    ytick={0.0,0.2,...,1.0},
    enlargelimits=upper,
    y tick label style={
        /pgf/number format/fixed zerofill},
    cycle list = {
        {draw=colorblindfree11_1, very thick},
        {draw=colorblindfree11_1, ,mark=square*, very thick, densely dashed},
        {draw=colorblindfree11_11, very thick},
        {draw=colorblindfree11_11, ,mark=square*, very thick, densely dashed},
        {draw=colorblindfree11_3, very thick},
        {draw=colorblindfree11_3, ,mark=square*, very thick, densely dashed}
    }
]
  \addplot table [x=x, y=y0] {data/de_pdf_pcpa_83.txt};
  \addplot table [x=x, y=y0] {data/pdf_pcpa_83.txt};
  \addplot table [x=x, y=y1] {data/de_pdf_pcpa_83.txt};
  \addplot table [x=x, y=y1] {data/pdf_pcpa_83.txt};
  \addplot table [x=x, y=y2] {data/de_pdf_pcpa_83.txt};
  \addplot table [x=x, y=y2] {data/pdf_pcpa_83.txt};
\end{axis}
\end{tikzpicture}
    \caption{Density functions (sim.) of the output from the \gls{pcpa} decoding over $10^{4}$ frames and the density functions (DE) returned from the density evolution of the \gls{pcpa} decoding for the \gls{rm}$\left(8,3\right)$ code at a $E_\mathrm{b}/N_{0}=1.5\text{ dB}$.}
    \label{fig:pdf_pcpa_83}
\end{figure}

From Fig.~\ref{fig:pdf_pcpa}, we can see that our derived lower bound $\mu'$ is tight.
The variance $\left(\sigma'\right)^{2}/n_{\mathbbm{B}}$ is underestimated because, at a finite code length, the output \gls{llr}s at each iteration of the \gls{pcpa} decoding are not necessarily independent across all subspaces, and the summation of two (dependent) random variables $X_{1}$ and $X_{2}$ has a variance of $\sigma_{1}^{2}+\sigma_{2}^{2}+2\text{Cov}\left(X_{1},X_{2}\right)$ and this variance maybe larger than the variance $\sigma_{1}^{2}+\sigma_{2}^{2}$ of our assumption that variables are independent.
Hence, under the independent assumption and the proposed model, a useful bound for the decoding bit error~\eqref{eqn:upper_bound_pa_bit_error_rate} cannot be produced.
Also, the distribution of the summation of dependent random variables is hard to analyze, unlike the distributions of independent random variables, and it may not be a normal distribution~\cite{sum_dep_normal_may_not_be_normal}.
Furthermore, the interaction between dependent/correlated distributions returned from different subspaces is hard to analyze. 
In conclusion, a more fine-grained analysis (i.e., considering the dependency among subspaces) for the density evolution is needed for \gls{cpa} decoding in future work to support analysis such as the bound for the decoding error.

The underestimation effect on the variance $\left(\sigma'\right)^{2}/n_{\mathbbm{B}}$ is also reflected in the iterative density evolution analysis on the \gls{rm}$\left(8,3\right)$ codes.
From Fig.~\ref{fig:pdf_pcpa_83}, the density evolution analysis returns a lower mean value than the mean value returned from the empirical simulation at iteration $1$, which shows that our derived lower bound is tight for \gls{rm}$\left(8,3\right)$ codes, except for the mean value returned from the density evolution at iteration $2$.
It can be observed from Fig.~\ref{fig:hist_pcpa_83} that most \gls{llr}s are concentrated on several coarse-grained bins; hence, the empirical mean and the empirical standard deviation might not be accurate.

\section{Asymptotic Analysis of \gls{cpa} Decoding}
\label{sec:Asym_Ana_CPA}
In this section, the asymptotic analysis for \gls{cpa} decoding with one decoding iteration is performed.
The asymptotic analysis is based on the probabilistic model derived by our density evolution analysis for \gls{cpa} decoding, and, in this work, we conduct the asymptotic analysis based on the asymptotic behaviour for all key operations.
Combining the analytical results of the projection, the \gls{fht} decoding, and the aggregation function, we derive the asymptotic behaviour.

\subsection{Asymptotic Analysis of the Projection and \gls{fht} Decoding}
It is mentioned in~\cite{Sidelnikov,dumer2004recursive} that the ML decoding asymptotically decodes all but a vanishing fraction of error patterns of a weight up to $\tfrac{n}{2}\left(1-\epsilon_{r}^{\min}\right)$ for low-rate RM codes with a fixed $r$, and the residual term
\begin{equation}
    \lim_{m\rightarrow\infty}\epsilon_{r}^{\min}\left(m\right) = \lim_{m\rightarrow\infty}m^{r/2}n^{-1/2}\left(\frac{c\left(2^{r}-1\right)}{r!}\right)^{1/2}\text{,}
    \label{eqn:frac_nondecodeable_error_pattern}
\end{equation}
where $c$ is a constant.
For order-$1$ RM codes, the residual term becomes
\begin{equation}
    \begin{split}
    \lim_{m\rightarrow\infty}\epsilon_{1}^{\min}\left(m\right)=&\lim_{m\rightarrow\infty}m^{1/2}n^{-1/2}\left(\frac{c\left(2^{1}-1\right)}{1!}\right)^{1/2}\\
    &=\lim_{m\rightarrow\infty} \left(\frac{m}{2^{m}}\right)^{1/2}c^{1/2}\\
    &\stackrel{(a)}{=}\lim_{m\rightarrow\infty}\left(\frac{1}{m2^{m-1}}\right)^{1/2}c^{1/2}=0\text{,}
    \end{split}
    \label{eqn:limit_correctable_error_patterns_FHT}
\end{equation}
where $\left(a\right)$ in~\eqref{eqn:limit_correctable_error_patterns_FHT} holds by L'H\^opital's rule.
Hence, we can conclude that the \gls{fht} decoding, which is \gls{ml} decoding for order-$1$ RM codes, can decode all but a vanishing fraction of error patterns of weights up to $\tfrac{n}{2}\left(1-\lim_{m\rightarrow\infty}\epsilon_{1}^{\min}\right)=\tfrac{n}{2}$.
Then, the probability of receiving soft information that can be correctly decoded by the \gls{fht} decoding is
\begin{equation}
    \sum_{i=0}^{n/2}\binom{n}{i}\hat{p}^{i}\left(1-\hat{p}\right)^{n-i}\text{,}
\end{equation}
which is the cumulative probability of having at most $\tfrac{n}{2}$ negative soft information, and $\hat{p}$ is the error probability of projected soft information.
In this work, we are interested in the asymptotic behaviour (i.e., $n\rightarrow\infty$) of received sequences that can be correctly decoded.

We can define a variable $\mathsf{X}=\sum_{z\in\{0,1\}^{m-r+1}}\bm{x}\left(z\right)$ where $\bm{x}\left(z\right)=\mathbbm{1}\!\left(\bm{l}_{/\mathbbm{B}_{i}}\left(z\right)<0\right)$ and the $z$ is the index for the projected bit estimation.
The random variable $\bm{x}\left(z\right)$ takes value $1$ with a probability of $\hat{p}$ and value $0$ with a probability of $1-\hat{p}$.
By the central limit theorem, the distribution of $\mathsf{X}$ can be well approximated (normal approximation to the binomial distribution) by $\mathbbm{P}\left(\mathsf{X}\right)=\mathcal{N}\left(n\hat{p},n\hat{p}\left(1-\hat{p}\right)\right)$.
The random variable $\mathsf{X}=i$ is exactly the event where $i$ out of $n$ projected code bits take the value $1$, which are the erroneous code bits.
Hence, 
\begin{equation}
    \sum_{i=0}^{n/2}\binom{n}{i}\hat{p}^{i}\left(1-\hat{p}\right)^{n-i} \approx \mathbbm{P}\left(\mathsf{X}\leq \frac{n}{2}\right)
\end{equation}
as $n\rightarrow\infty$.
By defining the standardized variable 
\begin{equation}
    \mathsf{Z}=\frac{\mathsf{X}-n\hat{p}}{\sqrt{n\hat{p}\left(1-\hat{p}\right)}}
\end{equation}
and $\mathbbm{P}\left(\mathsf{Z}\right)=\mathcal{N}\left(0,1\right)$, we have
\begin{equation}
    \mathbbm{P}\left(\mathsf{X}\leq \frac{n}{2}\right)=\mathbbm{P}\left(\mathsf{Z}\leq\frac{\sqrt{n}\left(1/2-\hat{p}\right)}{\sqrt{\hat{p}\left(1-\hat{p}\right)}}\right)\text{,}
\end{equation}
and
\begin{equation}
    \lim_{n\rightarrow\infty} \mathbbm{P}\left(\mathsf{Z}\leq\frac{\sqrt{n}\left(1/2-\hat{p}\right)}{\sqrt{\hat{p}\left(1-\hat{p}\right)}}\right)=
    \begin{cases}
        1\text{, if }\hat{p}<\frac{1}{2}\text{,}\\
        0\text{, if }\hat{p}>\frac{1}{2}\text{,}\\
        \frac{1}{2}\text{, if }\hat{p}=\frac{1}{2}\text{.}\\
    \end{cases}
    \label{eqn:asymptotic_FHT_block_error}
\end{equation}

We assume that we are working on RM codes with a code rate $R\leq C$, where $C$ is the channel capacity.
Let $r'$ be the solution of
\begin{equation}
    C\geq\max_{r'\in \{0,1,...,m\}}\left(\sum_{i=0}^{r'} \binom{m}{i}\right)\frac{1}{n}\text{,}
\end{equation}
and we assume that we are working on RM codes with $r\leq r'$.
When $r=m$, we have a perfect channel condition with a signal-to-noise ratio $=\infty$, and the vanishing error probability is trivially achieved.
Hence, we focus on the discussion on the finite signal-to-noise ratio, and $r<m$.

Based on the projection function, we can see that the \gls{llr} is $<0$ when there is an odd number of received \glspl{llr} in $\bm{l}$ is $<0$.
The probability $\hat{p}$ of projected \glspl{llr} $<0$, by~\eqref{eqn:BSC_approx_proj_prob}, is 
\begin{equation}
\begin{split}
    \hat{p}
    &=\frac{1}{2}\left(1-\left(1-2p\right)^{2^{r-1}}\right)\text{,}
\end{split}
\end{equation}
\begin{equation}
    p=\int_{-\infty}^{0}\frac{1}{\sqrt{2\pi \sigma^{2}}} \exp(-\frac{(y-1)^{2}}{2\sigma^{2}}) \dd{y}\text{,}
\end{equation}
and $y$ is the received symbol from the channel.
Under the assumption of a bounded variance in the \gls{awgn} channel, we have $p<1/2$.
When $p<1/2$, the probability of a projected \gls{llr} $<0$ is $\hat{p}<1/2$ when the order parameter $r<\infty$.
Hence, the asymptotically achievable code rate is
\begin{equation}
\begin{split}
    R = \lim_{m\rightarrow\infty}\frac{\sum_{i=0}^{r<\infty}\binom{m}{i}}{2^{m}}
    \leq\sum_{i=0}^{r<\infty}\frac{1}{i!}\lim_{m\rightarrow\infty}\frac{m^{i}}{2^{m}}=0\text{.}
\end{split}
\end{equation}
Also, $\lim_{m\rightarrow\infty}m-r+1=\infty$ for $r<\infty$, \eqref{eqn:limit_correctable_error_patterns_FHT} holds for the order-$1$ sub-codes for \gls{cpa} decoding, and \gls{fht} decoding decodes all but a vanishing fraction of error patterns by~\eqref{eqn:asymptotic_FHT_block_error} because $\hat{p}<1/2$.

\subsection{Asymptotic Analysis on the Aggregation Function}
In the following theorem, we show that the random variable $l$ has a positive mean and bounded variance when $r<\infty$ and the input \gls{llr} has a positive mean value.
\begin{theorem}
    Assume the density of the \gls{llr} has the symmetric property~\cite{richardson2008modern} where
\begin{equation}
    \mathrm{p}\left(l\right) = \mathrm{p}\left(-l\right)\exp\left(l\right)\text{,}
\end{equation}
and the random variable 
\begin{equation}
\begin{split}
    d:=\tanh\left(\frac{l}{2}\right)=\prod_{i=1}^{2^{r-1}-1}\tanh\left(\frac{l_{i}}{2}\right)
\end{split}
    \label{eqn:d_density_def}
\end{equation}
has a symmetric $d$-density~\cite{richardson2008modern} where
\begin{equation}
    \mathrm{p}\left(d\right) = \mathrm{p}\left(-d\right)\frac{1+d}{1-d}\text{.}
\end{equation}
   Two density functions can be converted by~\cite{richardson2008modern}
\begin{equation}
    \mathrm{p}\left(l\right) = \frac{\mathrm{p}\left(d\right)}{2\cosh^{2}\left(\frac{l}{2}\right)}\text{.}
    \label{eqn:d_to_l_density)func}
\end{equation} 
    $\mathbbm{E}[d]>0$ if and only if $\mathbbm{E}[l]>0$ when the density function of $d$ is symmetric.
    \label{thm:positive_mean_aggregation}
\end{theorem}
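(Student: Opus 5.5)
The plan is to show that, for a symmetric $l$-density, both $\mathbbm{E}[d]$ and $\mathbbm{E}[l]$ reduce to integrals over $(0,\infty)$ of a strictly positive integrand against the density. Each expectation is then automatically nonnegative. It vanishes exactly when the law is concentrated at $l=0$ (equivalently $d=0$), and this happens simultaneously for both variables. The equivalence follows at once.

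First I fold $\mathbbm{E}[l]$ onto the positive half-line. I split $\int_{-\infty}^{\infty} l\,\mathrm{p}(l)\dd l$ at $0$, substitute $l\mapsto -l$ on the negative part, and use the symmetry $\mathrm{p}(-l)=\mathrm{p}(l)e^{-l}$:
\begin{equation*}
\mathbbm{E}[l]=\int_{0}^{\infty} l\left(1-e^{-l}\right)\mathrm{p}(l)\dd l .
\end{equation*}
The factor $l(1-e^{-l})$ is strictly positive for $l>0$. Hence $\mathbbm{E}[l]\geq 0$, with equality if and only if $\mathrm{p}$ puts no mass on $(0,\infty)$. By symmetry it then puts no mass on $(-\infty,0)$ either, so $l=0$ almost surely.

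Next I do the same for $d$ on $(-1,1)$, using $\mathrm{p}(-d)=\mathrm{p}(d)\tfrac{1-d}{1+d}$:
\begin{equation*}
\mathbbm{E}[d]=\int_{0}^{1} d\left(1-\tfrac{1-d}{1+d}\right)\mathrm{p}(d)\dd d=\int_{0}^{1}\frac{2d^{2}}{1+d}\,\mathrm{p}(d)\dd d .
\end{equation*}
This is nonnegative and vanishes if and only if $d=0$ almost surely. Alternatively, I could obtain the same expression from the $l$-domain using the change of variables \eqref{eqn:d_to_l_density)func}: writing $\mathbbm{E}[d]=\mathbbm{E}[\tanh(l/2)]$ and folding as above gives $\int_0^\infty \tanh(l/2)(1-e^{-l})\mathrm{p}(l)\dd l$, again with a positive integrand.

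Because $d=\tanh(l/2)$ is a strictly increasing bijection with $d>0\iff l>0$, the events $\{d=0\}$ and $\{l=0\}$ coincide. Therefore $\mathbbm{E}[d]=0\iff\mathbbm{E}[l]=0$, and since neither expectation can be negative, $\mathbbm{E}[d]>0\iff\mathbbm{E}[l]>0$.

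For completeness I would note one consistency point: $d$ in \eqref{eqn:d_density_def} is a product of independent symmetric $\tanh(l_i/2)$. The product of symmetric $d$-densities is symmetric (the check-node operation preserves symmetry), so the hypothesis on $\mathrm{p}(d)$ is compatible with that on $\mathrm{p}(l)$.

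The main obstacle is rigor around point masses and integrability, not the algebra. The symmetric density may carry an atom at $l=0$, or at $\pm\infty$ after the FHT-type sign flips. I would handle this by working with measures, where the symmetry reads $\mathrm{d}P(-l)=e^{-l}\mathrm{d}P(l)$, so the folding identity holds with the atom at $0$ contributing zero. I would also assume $\mathbbm{E}|l|<\infty$, which holds for the Gaussian inputs \eqref{eqn:received_l_density}, so the splitting of the integral is legitimate.
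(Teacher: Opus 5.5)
Your proposal is correct and follows essentially the same route as the paper. Both fold $\mathbbm{E}[l]$ and $\mathbbm{E}[d]$ onto the positive half-line via the symmetry relation, and since $1-e^{-l}=2t/(1+t)$ with $t=\tanh(l/2)$, your integrands coincide with the paper's, so positivity of either expectation reduces to the density putting mass on $(0,\infty)$. Your version is slightly tighter: it keeps the Jacobian $1/(2\cosh^{2}(l/2))$ that the paper drops in its $\mathbbm{E}[d]$ integral, and it makes explicit the degenerate case of a point mass at $0$, where both expectations vanish and the paper's unconditional ``$>0$'' claim fails.
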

\begin{proof}
    \textbf{Necessary condition:} 
By definition, we have~\cite{richardson2008modern}
\begin{equation}
\begin{split}
    &\mathbbm{E}\left[\tanh\left(\frac{l}{2}\right)\right]\\
    &=\int_{-\infty}^{\infty}\mathrm{p}\left(\tanh\left(\frac{l}{2}\right)\right)\tanh\left(\frac{l}{2}\right)\dd{l}\\
    &=\int_{-\infty}^{0}\mathrm{p}\left(\tanh\left(\frac{l}{2}\right)\right)\tanh\left(\frac{l}{2}\right)\dd{l}\\
    &+ \int_{0}^{\infty}\mathrm{p}\left(\tanh\left(\frac{l}{2}\right)\right)\tanh\left(\frac{l}{2}\right)\dd{l}\\
    &=-\int_{0}^{\infty}\frac{1-\tanh\left(\frac{l}{2}\right)}{1+\tanh\left(\frac{l}{2}\right)}\mathrm{p}\left(\tanh\left(\frac{l}{2}\right)\right)\tanh\left(\frac{l}{2}\right)\dd{l}\\
    &+ \int_{0}^{\infty}\mathrm{p}\left(\tanh\left(\frac{l}{2}\right)\right)\tanh\left(\frac{l}{2}\right)\dd{l}\\
    &=\int_{0}^{\infty}\left(1-\frac{1-\tanh\left(\frac{l}{2}\right)}{1+\tanh\left(\frac{l}{2}\right)}\right) \mathrm{p}\left(\tanh\left(\frac{l}{2}\right)\right) \tanh\left(\frac{l}{2}\right) \dd{l}\\
    &= \sum_{i=1}^{\infty} \int_{a_{i}}^{b_{i}}\frac{2\tanh\left(\frac{l}{2}\right)}{1+\tanh\left(\frac{l}{2}\right)} \mathrm{p}\left(\tanh\left(\frac{l}{2}\right)\right) \tanh\left(\frac{l}{2}\right) \dd{l}>0\text{,}
\end{split}
\end{equation}
where $\bigcup_{i=1}^{\infty}\left(a_{i},b_{i}\right)=\left[0,\infty\right)$, $a_{i},b_{i}\in\mathbbm{R}$, $a_{i}<b_{i}$, $b_{i}\leq a_{i+1}$, and $0\leq\tanh\left(\frac{l_{i}}{2}\right)<1\;\forall \text{ }l_{i}\in\left[0,\infty\right)$.
This result implies that
\begin{equation}
    \frac{2\tanh\left(\frac{l}{2}\right)}{1+\tanh\left(\frac{l}{2}\right)} \mathbbm{P}\left(\tanh\left(\frac{l}{2}\right)\right)>0\text{.}
    \label{eqn:positive_d_expected_value}
\end{equation}

Then, we have
\begin{equation}
\begin{split}
    &\mathbbm{E}\left[l\right]\\
    &= \int_{-\infty}^{\infty} \mathrm{p}\left(l\right)l \dd{l}\\
    &=\int_{-\infty}^{\infty} \mathrm{p}\left(\tanh\left(\frac{l}{2}\right)\right) \frac{l}{2\cosh^{2}\left(\frac{l}{2}\right)} \dd{l}\\
    &=\int_{0}^{\infty} \frac{2\tanh\left(\frac{l}{2}\right)}{1+\tanh\left(\frac{l}{2}\right)} \mathrm{p}\left(\tanh\left(\frac{l}{2}\right)\right) \frac{l}{2\cosh^{2}\left(\frac{l}{2}\right)} \dd{l}\\
    &=\sum_{i=1}^{\infty} \int_{a_{i}}^{b_{i}} \frac{2\tanh\left(\frac{l}{2}\right)}{1+\tanh\left(\frac{l}{2}\right)} \mathrm{p}\left(\tanh\left(\frac{l}{2}\right)\right) \frac{l}{2\cosh^{2}\left(\frac{l}{2}\right)} \dd{l}>0
\end{split}
\label{eqn:sym_l_density_mean_bound}
\end{equation}
because $\frac{l}{2\cosh^{2}\left(\frac{l}{2}\right)}>0\text{ }\forall \text{ }l\in\left(0,\infty\right)$ just like $\tanh\left(\frac{l}{2}\right)$ and $\exists\left(a_{i},b_{i}\right)\subseteq\left[0,\infty\right)$ such that~\eqref{eqn:positive_d_expected_value} holds.

\textbf{Sufficient condition:} The necessary condition can be prove similarly because $\mathbbm{E}[l]>0$ also implies $\exists\left(a_{i},b_{i}\right)\subseteq\left[0,\infty\right)$, such that~\eqref{eqn:positive_d_expected_value} holds, which leads to $\mathbbm{E}[d]>0$.
\end{proof}

By the sufficient condition of Theorem~\ref{thm:positive_mean_aggregation}, the expected value
\begin{align*}
    \mathbbm{E}\left[\tanh\left(\frac{\bm{l}\left(z\right)}{2}\right)\right]>0
\end{align*}
when the input \gls{llr} has a positive mean $\mathbbm{E}\left[\bm{l}\left(z\right)\right]>0$.
Because of the \gls{iid} assumption on the transmitted code bits, it is shown in~\cite{saeyoung2001Gaussianapproximation} that
\begin{equation}
\begin{split}
    \mathbbm{E}\left[\prod_{i=1}^{2^{r-1}-1}\tanh\left(\frac{l_{i}}{2}\right)\right] &= \mathbbm{E}\left[\tanh\left(\frac{l_{i}}{2}\right)\right]^{2^{r-1}-1}>0
\end{split}
\end{equation}
when $r<\infty$.
Hence, by Theorem~\ref{thm:positive_mean_aggregation}, $\mathbbm{E}[l]>0$ when $r<\infty$.

According to~\eqref{eqn:d_to_l_density)func}, the $d$-density can be converted to the $l$-density.
Hence, the second-order moment of $l=2\tanh^{-1}\left(d\right)$ can be computed by
\begin{equation}
\begin{split}
    &\mathbbm{E}\left[l^{2}\right]\\
    &=\int_{-\infty}^{\infty} \mathrm{p}\left(l\right) l^{2} \dd{l}\\
    &=\int_{-\infty}^{\infty} \mathrm{p}\left(\tanh\left(\frac{l}{2}\right)\right) \frac{l^{2}}{2\cosh^{2}\left(\frac{l}{2}\right)}  \dd{l}\\
    &\stackrel{(a)}{\leq} \int_{-\infty}^{\infty} \mathrm{p}\left(\tanh\left(\frac{l}{2}\right)\right) 4 \dd{l}=4\text{,}\\
\end{split}
    \label{eqn:l_density_with_d_density_variance_bound}
\end{equation}
the inequality $(a)$ in~\eqref{eqn:l_density_with_d_density_variance_bound} holds because
\begin{equation}
\begin{split}
    &\frac{l^{2}}{2\cosh^{2}\left(l\right)}\\
    &= \frac{l^{2}}{2\sinh^{2}\left(\frac{l}{2}\right)}2\tanh^{2}\left(\frac{l}{2}\right)\\
    &=\frac{2l^{2}}{\left(\exp\left(\frac{l}{2}\right) - \exp\left(-\frac{l}{2}\right)\right)^{2}}2\tanh^{2}\left(\frac{l}{2}\right)\\
    &\stackrel{(b)}{\leq}\frac{2\left(\exp\left(\frac{l}{2}\right) + \exp\left(-\frac{l}{2}\right)\right)^{2}}{\left(\exp\left(\frac{l}{2}\right) - \exp\left(-\frac{l}{2}\right)\right)^{2}}2\tanh^{2}\left(\frac{l}{2}\right)\\
    &=\frac{1}{\tanh^{2}\left(\frac{l}{2}\right)} 4\tanh^{2}\left(\frac{l}{2}\right)=4\text{,}
\end{split}
\end{equation}
and the inequality $(b)$ holds because
\begin{equation}
\begin{split}
    \left(\exp\left(\frac{l}{2}\right) + \exp\left(-\frac{l}{2}\right)\right)^{2}&= \left(2\cosh\left(\frac{l}{2}\right)\right)^{2}\\
    &\stackrel{(c)}{=} 4\left(\sum_{i=0}^{\infty}\frac{\left(l/2\right)^{2i}}{\left(2i\right)!}\right)^{2}\\
    &\stackrel{(d)}{\geq} 4 \left(1+\frac{l^{2}}{8}\right)^{2}\geq l^{2}\text{,}
\end{split}
\end{equation}
where the equality $(c)$ holds by the definition of the Taylor series for the $\cosh\left(\cdot\right)$ function, and the inequality $(d)$ holds by keeping only the first two terms of the series and truncating the rest.
The bounded second-order moment ($\mathbbm{E}\left[l^{2}\right]$) implies the first-order moment (i.e., expected value $\mathbbm{E}\left[l\right]$) is also bounded.
Hence,
\begin{equation}
\begin{split}
    &\text{Var}\left[l\right]= \mathbbm{E}\left[l^{2}\right] - \left(\mathbbm{E}\left[l\right]\right)^{2}\leq 4 - \left(\mathbbm{E}\left[l\right]\right)^{2} < \infty\text{.}
\end{split}
\label{eqn:sym_l_density_aggre_l_var_bound}
\end{equation}
Hence, when $r<\infty$, the positive mean~\eqref{eqn:sym_l_density_mean_bound}, the bounded variance~\eqref{eqn:sym_l_density_aggre_l_var_bound}, and the asymptotically vanishing error probability in the \gls{fht} decoding~\eqref{eqn:asymptotic_FHT_block_error} are returned from \gls{cpa} decoding, and we can see that the random variable $u$, which is defined in~\eqref{eqn:aggr_var_def}, returned from the estimation for the aggregation based on a subspace $\mathbbm{B}_{i}$ asymptotically has
\begin{equation}
    \mathbbm{E}\left[u\right]\stackrel{(a)}{=}\mathbbm{E}\left[l\right]\text{, }\text{Var}\left[u\right]\stackrel{(b)}{=}\text{Var}\left[l\right]\text{,}
\end{equation}
and the equality $(a)$ holds because of~\eqref{eqn:expected_value_u} and the equality $(b)$ holds because of~\eqref{eqn:sum_llr_var}.
If the number of subspaces is asymptotically infinite, then, by~\eqref{eqn:llr_dist_next_iter}, the vanishing error probability can be achieved.

\subsection{Asymptotic Analysis on the Number of Subspaces and the Asymptotic Behaviour of \gls{cpa} Decoding}
From~\eqref{eqn:llr_dist_next_iter}, we know that the averaging over the summation of the aggregation results is a form of variance reduction.
Hence, if the number of subspaces $\mathbbm{B}_{i}$ goes to $\infty$ as $n$ goes to $\infty$, \gls{cpa} decoding will asymptotically achieve vanishing error probability.
The following proposition shows that the number of subspaces can go to infinity as $n\rightarrow\infty$.
\begin{prop}
    If $1<r\leq \tfrac{m}{q} +1$, then $\lim_{m\rightarrow\infty}n_{\mathbbm{B}}=\lim_{m\rightarrow\infty}\binom{m}{r-1}_{2}=\infty$ for all $q>1$.
    \label{prop:limit_num_subspaces}
\end{prop}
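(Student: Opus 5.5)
The plan is to bound the Gaussian binomial coefficient $\binom{m}{k}_{2}$, with $k:=r-1$, from below by an explicit quantity that diverges. The hypothesis $1<r\leq \tfrac{m}{q}+1$ gives $1\leq k\leq m/q$. Since $q>1$, this means $m-k\geq m(1-1/q)\to\infty$ as $m\rightarrow\infty$.

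Write the $q$-binomial with base $2$ in product form:
\begin{equation*}
\binom{m}{k}_{2}=\prod_{j=0}^{k-1}\frac{2^{m-j}-1}{2^{k-j}-1}.
\end{equation*}
We bound each factor from below. For $0\leq j\leq k-1$, the inequality $2^{m-j}-1\geq 2^{m-k}\left(2^{k-j}-1\right)$ holds, because it rearranges to $2^{m-k}\geq 1$. Hence every factor is at least $2^{m-k}$, and
\begin{equation*}
\binom{m}{k}_{2}\geq 2^{k(m-k)}\geq 2^{m-k}.
\end{equation*}
The second inequality uses $k\geq1$.

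An even simpler route uses only the $j=0$ factor, $\frac{2^{m}-1}{2^{k}-1}\geq 2^{m-k}$, together with the fact that all other factors are at least $1$. Either way, combining with $m-k\geq m(1-1/q)$ gives
\begin{equation*}
n_{\mathbbm{B}}\geq 2^{m(1-1/q)}\rightarrow\infty.
\end{equation*}
This proves the claim.

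There is no real obstacle in this argument. The only delicate point is that the bound must be uniform in how $r$ scales with $m$, whether $r$ is fixed or grows up to $m/q+1$. The condition $q>1$ is used exactly to ensure $m-k\to\infty$. As a side remark, the same lower bound also shows divergence whenever $m-r\rightarrow\infty$, but we state it under the proposition's hypothesis.
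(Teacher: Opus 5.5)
Your proof is correct and follows essentially the same route as the paper: both lower-bound the Gaussian binomial by a power of $2^{m-k}$ and then use $m-k\geq m(1-1/q)\rightarrow\infty$. The paper uses $\binom{m}{K}_{2}>(2^{m-K}-1)^{K}$, while your factor-by-factor bound $2^{k(m-k)}$ is slightly sharper, and your explicit estimate $n_{\mathbbm{B}}\geq 2^{m(1-1/q)}$ holds uniformly in how $r$ scales with $m$, which is cleaner than the paper's case analysis in terms of $K(m)$ and $K'(m)$.
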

\begin{proof}
    The Gaussian binomial coefficient is defined as
    \begin{equation}
        \binom{m}{K}_{2}=\frac{\prod_{i=0}^{K-1}\left(2^{m}-2^{i}\right)}{\prod_{j=0}^{K-1}\left(2^{K}-2^{j}\right)} > \frac{\left(2^{m}-2^{K}\right)^{K}}{\left(2^{K}\right)^{K}}\text{.}
    \end{equation}
    When $m\rightarrow\infty$, we have
    \begin{equation}
    \begin{split}
        &\lim_{m\rightarrow\infty}\binom{m}{K}_{2}\\
        &>\lim_{m\rightarrow\infty}\frac{\left(2^{m}-2^{K}\right)^{K}}{\left(2^{K}\right)^{K}}\\
        &=\lim_{m\rightarrow\infty}\left(2^{m\left(1-\frac{K\left(m\right)}{m}\right)}-1\right)^{K}\\
        &=\begin{cases}
            \infty\text{, }&\text{if }0< K\left(m\right)<m\text{ or }K'\left(m\right)<1\text{,}\\
            0\text{, }&\text{if }K\left(m\right)=m \text{ or }K'\left(m\right)=1\text{,}\\
            \left(-1\right)^{\lim_{m\rightarrow\infty}K\left(m\right)}\text{, }&\text{if }K'\left(m\right)>1\text{,}\\
            1\text{, }&\text{if }K=0\text{,}
        \end{cases}
    \end{split}
    \end{equation}
    where $0\leq K\left(m\right)\leq m$ denotes the parameter $K$ is a function of $m$, and $K'\left(m\right)$ is the derivative with respect to $m$.
    Hence, we can define $K\left(m\right):=\tfrac{m}{q}$ for $q>1$.
    Let $0< K=r-1\leq \tfrac{m}{q}$, we have $1< r\leq \tfrac{m}{q} +1<m+1$.
    The maximum growth rate of $K\left(m\right)$ is achieved when $1<q<2$.
\end{proof}
The largest code rate that has an infinite number of subspaces can be computed as follows.
The code rate of the \gls{rm}$\left(m,\tfrac{m}{q} +1\right)$ codes can be computed by
\begin{equation}
\begin{split}
    \frac{\sum_{i=0}^{m/q+1}\binom{m}{i}}{2^{m}}&=\sum_{i=0}^{m/q+1}\binom{m}{i}0.5^{i}0.5^{m-i}\\
    &=\mathbbm{P}\left(i\leq \frac{m}{q}+1\right)\text{,}
\end{split}
    \label{eqn:std_Chebyshev}
\end{equation}
which is the distribution function of the binomial distribution with a flipping probability of $0.5$, a mean value of $\tfrac{m}{2}$, and a variance of $0.5^{2}m$.
By the definition of the cumulative distribution, asymptotically, the largest code-rate, which has an infinite number of subspaces, is
\begin{equation}
    \begin{split}
        &\lim_{m\rightarrow\infty}\mathbbm{P}\left(i\leq\frac{m}{q}+1\right)=\mathbbm{P}\left(i\leq\infty\right)=1\text{.}
    \end{split}
    \label{eqn:limit_code_rate_cpa}
\end{equation}

Given the analysis on the projection, \gls{fht} decoding, and the aggregation function $\left(r<m=\infty\right)$, the limiting code rate ($R=$~\eqref{eqn:limit_code_rate_cpa}), and the given channel capacity $C$, we can conclude that, when code rate $R\leq \min\left\{C,1,0\right\}=0$, \gls{cpa} decoding can asymptotically achieve a vanishing error probability under the vanishing code rate.

\section{Conclusion}
\label{sec:conclusion}
We prove that \gls{cpa} decoding returns the exact marginal probability and is symmetric.
Then, we build a density evolution model to analyze \gls{cpa} decoding.
Simulation results show that our proposed density evolution model captures the fast reduction in the mean and the variance of the soft information returned from \gls{cpa} decoding, and these results qualitatively explain the decoding mechanism and the fast convergence behind the CPA decoding. 
Lastly, we perform an asymptotic analysis on \gls{cpa} decoding based on the proposed density evolution model, and we find that \gls{cpa} decoding asymptotically achieves a vanishing error probability when decoding \gls{rm} codes with a vanishing code rate.
The analysis in this work provides tools and insights for designing soft-decision \gls{pa}-based decoding with reduced complexity and improved decoding performance in future work.

\FloatBarrier

\bibliographystyle{IEEEtran}
\bibliography{IEEEabrv,reference.bib}{}

\appendices
\end{document}

%% file: color_package.tex
\usepackage[dvipsnames]{xcolor}

\definecolor{colorblindfree3_1}{RGB}{252,141,89}
\definecolor{colorblindfree3_2}{RGB}{255,255,191}
\definecolor{colorblindfree3_3}{RGB}{145,191,219}

\definecolor{colorblindfree4_1}{RGB}{215,25,28}
\definecolor{colorblindfree4_2}{RGB}{253,174,97}
\definecolor{colorblindfree4_3}{RGB}{171,217,233}
\definecolor{colorblindfree4_4}{RGB}{44,123,182}

\definecolor{colorblindfree5_1}{RGB}{215,25,28}
\definecolor{colorblindfree5_2}{RGB}{253,174,97}
\definecolor{colorblindfree5_3}{RGB}{255,255,191}
\definecolor{colorblindfree5_4}{RGB}{171,217,233}
\definecolor{colorblindfree5_5}{RGB}{44,123,182}

\definecolor{colorblindfree6_1}{RGB}{215,48,39}
\definecolor{colorblindfree6_2}{RGB}{252,141,89}
\definecolor{colorblindfree6_3}{RGB}{254,224,144}
\definecolor{colorblindfree6_4}{RGB}{224,243,248}
\definecolor{colorblindfree6_5}{RGB}{145,191,219}
\definecolor{colorblindfree6_6}{RGB}{69,117,180}

\definecolor{colorblindfree7_1}{RGB}{215,48,39}
\definecolor{colorblindfree7_2}{RGB}{244,109,67}
\definecolor{colorblindfree7_3}{RGB}{253,174,97}
\definecolor{colorblindfree7_4}{RGB}{254,224,144}
\definecolor{colorblindfree7_5}{RGB}{224,243,248}
\definecolor{colorblindfree7_6}{RGB}{171,217,233}
\definecolor{colorblindfree7_7}{RGB}{116,173,209}

\definecolor{colorblindfree8_1}{RGB}{215,48,39}
\definecolor{colorblindfree8_2}{RGB}{244,109,67}
\definecolor{colorblindfree8_3}{RGB}{253,174,97}
\definecolor{colorblindfree8_4}{RGB}{254,224,144}
\definecolor{colorblindfree8_5}{RGB}{224,243,248}
\definecolor{colorblindfree8_6}{RGB}{171,217,233}
\definecolor{colorblindfree8_7}{RGB}{116,173,209}
\definecolor{colorblindfree8_8}{RGB}{69,117,180}

\definecolor{colorblindfree9_1}{RGB}{215,48,39}
\definecolor{colorblindfree9_2}{RGB}{244,109,67}
\definecolor{colorblindfree9_3}{RGB}{253,174,97}
\definecolor{colorblindfree9_4}{RGB}{254,224,144}
\definecolor{colorblindfree9_5}{RGB}{255,255,191}
\definecolor{colorblindfree9_6}{RGB}{224,243,248}
\definecolor{colorblindfree9_7}{RGB}{171,217,233}
\definecolor{colorblindfree9_8}{RGB}{116,173,209}
\definecolor{colorblindfree9_9}{RGB}{69,117,180}

\definecolor{colorblindfree10_1}{RGB}{165,0,38}
\definecolor{colorblindfree10_2}{RGB}{215,48,39}
\definecolor{colorblindfree10_3}{RGB}{244,109,67}
\definecolor{colorblindfree10_4}{RGB}{253,174,97}
\definecolor{colorblindfree10_5}{RGB}{254,224,144}
\definecolor{colorblindfree10_6}{RGB}{224,243,248}
\definecolor{colorblindfree10_7}{RGB}{171,217,233}
\definecolor{colorblindfree10_8}{RGB}{116,173,209}
\definecolor{colorblindfree10_9}{RGB}{69,117,180}
\definecolor{colorblindfree10_10}{RGB}{49,54,149}

\definecolor{colorblindfree11_1}{RGB}{165,0,38}
\definecolor{colorblindfree11_2}{RGB}{215,48,39}
\definecolor{colorblindfree11_3}{RGB}{244,109,67}
\definecolor{colorblindfree11_4}{RGB}{253,174,97}
\definecolor{colorblindfree11_5}{RGB}{254,224,144}
\definecolor{colorblindfree11_6}{RGB}{255,255,191}
\definecolor{colorblindfree11_7}{RGB}{224,243,248}
\definecolor{colorblindfree11_8}{RGB}{171,217,233}
\definecolor{colorblindfree11_9}{RGB}{116,173,209}
\definecolor{colorblindfree11_10}{RGB}{69,117,180}
\definecolor{colorblindfree11_11}{RGB}{49,54,149}

%% file: IEEEabrv.bib
@STRING{IEEE_J_COML       = "{IEEE} Commun. Lett."}

@STRING{IEEE_J_COM        = "{IEEE} Trans. Commun."}

@STRING{IEEE_J_IT         = "{IEEE} Trans. Inform. Theory"}


%% file: reference.bib
@ARTICLE{RMcode,
  author={Muller, D. E.},
  journal={Trans. of the I.R.E. Professional Group on Electronic Computers}, 
  title={Application of Boolean algebra to switching circuit design and to error detection}, 
  year={1954},
  volume={EC-3},
  number={3},
  pages={6-12},
  doi={10.1109/IREPGELC.1954.6499441}}

@INPROCEEDINGS{rmpolar,
  author={Ar{\i}kan, Erdal},
  booktitle={IEEE Information Theory Workshop on Information Theory (ITW 2010, Cairo)}, 
  title={A survey of {Reed-Muller} codes from polar coding perspective}, 
  year={2010},
  volume={},
  number={},
  pages={1-5},
  doi={10.1109/ITWKSPS.2010.5503223}}

@ARTICLE{RMBEC,
  author={Kudekar, Shrinivas and Kumar, Santhosh and Mondelli, Marco and Pfister, Henry D. and Şaşoǧlu, Eren and Urbanke, Rüdiger L.},
  journal=IEEE_J_IT, 
  title={ {Reed–Muller} Codes Achieve Capacity on Erasure Channels}, 
  year={2017},
  volume={63},
  number={7},
  pages={4298-4316},
  doi={10.1109/TIT.2017.2673829}}

@ARTICLE{RMBSC,
  author={Abbe, Emmanuel and Shpilka, Amir and Wigderson, Avi},
  journal=IEEE_J_IT, 
  title={ {Reed–Muller} Codes for Random Erasures and Errors}, 
  year={2015},
  volume={61},
  number={10},
  pages={5229-5252},
  doi={10.1109/TIT.2015.2462817}}

@inproceedings{RMBSCimprove,
  author = {Sberlo, Ori and Shpilka, Amir},
  title = {On the Performance of {Reed-Muller} Codes with Respect to Random Errors and Erasures},
  year = {2020},
  publisher = {Society for Industrial and Applied Mathematics},
  address = {USA},
  booktitle = {Proc. of the Thirty-First Annual ACM-SIAM Symposium on Discrete Algorithms},
  pages = {1357–1376},
  numpages = {20},
  location = {Salt Lake City, Utah},
  series = {SODA '20}
}

@ARTICLE{RMBMS_bit_journal,
  author={Reeves, Galen and Pfister, Henry D.},
  journal=IEEE_J_IT,
  title={{Reed–Muller} Codes on {BMS} Channels Achieve Vanishing Bit-Error Probability for All Rates Below Capacity}, 
  year={2023},
  volume={},
  number={},
  pages={1-1},
  doi={10.1109/TIT.2023.3286452}}

@INPROCEEDINGS{RMBMS_block,
  author={Abbe, Emmanuel and Sandon, Colin},
  booktitle={IEEE 64th Annual Symposium on Foundations of Computer Science (FOCS)}, 
  title={A proof that {Reed-Muller} codes achieve {Shannon} capacity on symmetric channels}, 
  year={2023},
  volume={},
  number={},
  pages={177-193},
  doi={10.1109/FOCS57990.2023.00020}}

@ARTICLE{rmpolarize,
  author={Abbe, Emmanuel and Ye, Min},
  journal=IEEE_J_IT, 
  title={ {Reed-Muller} Codes Polarize}, 
  year={2020},
  volume={66},
  number={12},
  pages={7311-7332},
  doi={10.1109/TIT.2020.3023487}}

@ARTICLE{reed,
  author={Reed, I.},
  journal={Trans. of the IRE Professional Group on Information Theory},
  title={A class of multiple-error-correcting codes and the decoding scheme}, 
  year={1954},
  volume={4},
  number={4},
  pages={38-49},
  doi={10.1109/TIT.1954.1057465}}

@article{FHT,
  author        = "R. R. Green",
  title         = "A serial orthogonal decoder",
  journal       = "JPL Space Programs Summary",
  volume        = "37",
  year          = "1966",
  pages         = "247-253"
}

@article{FHT_soft,
  author={Y. {Be'ery} and J. {Snyders}},
  journal=IEEE_J_IT, 
  title={Optimal soft decision block decoders based on fast {Hadamard} transform}, 
  year={1986},
  volume={32},
  number={3},
  pages={355-364},
}

@ARTICLE{Sidelnikov,
  author={V. M. {Sidel’nikov} and A. S. {Pershakov}},
  journal={ Problemy peredachi informatsii}, 
  title={Decoding of {Reed-Muller} codes with a large number of errors}, 
  year={1992},
  volume={28},
  number={3},
  pages={80-94},
}

@ARTICLE{DumerList,
  author={Dumer, I. and Shabunov, K.},
  journal=IEEE_J_IT, 
  title={Soft-decision decoding of {Reed-Muller} codes: recursive lists}, 
  year={2006},
  volume={52},
  number={3},
  pages={1260-1266},
  doi={10.1109/TIT.2005.864443}}

@ARTICLE{RPA,
  author={Ye, Min and Abbe, Emmanuel},
  journal=IEEE_J_IT, 
  title={Recursive Projection-Aggregation Decoding of {Reed-Muller} Codes}, 
  year={2020},
  volume={66},
  number={8},
  pages={4948-4965},
  doi={10.1109/TIT.2020.2977917}}

@INPROCEEDINGS{RPA_BP,
  author={Lian, Mengke and Häger, Christian and Pfister, Henry D.},
  booktitle={IEEE International Symposium on Information Theory (ISIT)}, 
  title={Decoding {Reed–Muller} Codes Using Redundant Code Constraints}, 
  year={2020},
  volume={},
  number={},
  pages={42-47},
  doi={10.1109/ISIT44484.2020.9174087}}

@article{Pruning,
  author    = {Qin Huang and
               Bin Zhang},
  title     = {Pruned Collapsed Projection-Aggregation Decoding of {Reed-Muller} Codes},
  journal   = {CoRR},
  volume    = {abs/2105.11878},
  year      = {2021},
  url       = {https://arxiv.org/abs/2105.11878},
  eprinttype = {arXiv},
  eprint    = {2105.11878},
  bibsource = {dblp computer science bibliography, https://dblp.org}
}

@ARTICLE{RM_S,
  author={Abbe, Emmanuel and Shpilka, Amir and Ye, Min},
  journal=IEEE_J_IT, 
  title={{Reed–Muller} Codes: Theory and Algorithms}, 
  year={2021},
  volume={67},
  number={6},
  pages={3251-3277},
  doi={10.1109/TIT.2020.3004749}}

@book{richardson2008modern,
  title={Modern coding theory},
  author={Richardson, Tom and Urbanke, Ruediger},
  year={2008},
  publisher={Cambridge university press}
}

@ARTICLE{richardson2001deldpc,
  author={Richardson, T.J. and Urbanke, R.L.},
  journal=IEEE_J_IT, 
  title={The capacity of low-density parity-check codes under message-passing decoding}, 
  year={2001},
  volume={47},
  number={2},
  pages={599-618},
  doi={10.1109/18.910577}}

@ARTICLE{saeyoung2001Gaussianapproximation,
  author={Sae-Young Chung and Richardson, T.J. and Urbanke, R.L.},
  journal=IEEE_J_IT, 
  title={Analysis of sum-product decoding of low-density parity-check codes using a {Gaussian} approximation}, 
  year={2001},
  volume={47},
  number={2},
  pages={657-670},
  doi={10.1109/18.910580}}

@ARTICLE{dumer2004recursive,
  author={Dumer, I.},
  journal=IEEE_J_IT, 
  title={Recursive decoding and its performance for low-rate {Reed-Muller} codes}, 
  year={2004},
  volume={50},
  number={5},
  pages={811-823},
  doi={10.1109/TIT.2004.826632}}

@INPROCEEDINGS{rameshwar2024upper,
  author={Rameshwar, V. Arvind and Lalitha, V.},
  booktitle={IEEE International Symposium on Information Theory (ISIT)}, 
  title={An Upper Bound on the Error Probability of {RPA} Decoding of {Reed-Muller} Codes Over the {BSC}}, 
  year={2025},
  volume={},
  number={},
  pages={1-6},
  doi={10.1109/ISIT63088.2025.11195324}}

@ARTICLE{zhang2025errorpatternpa,
  author={Zhang, Bin and Chen, Fanyun and Huang, Qin},
  journal=IEEE_J_IT, 
  title={Coset Error Pattern in Projection-Aggregation Decoding}, 
  year={2025},
  volume={},
  number={},
  pages={1-1},
  doi={10.1109/TIT.2025.3576779}}

@article{pfister2025capacity,
  title={Capacity on {BMS} Channels via Code Symmetry and Nesting},
  author={Pfister, Henry D and Reeves, Galen},
  journal={arXiv preprint arXiv:2504.15394},
  year={2025}
}

@INPROCEEDINGS{li2024layeredCPA,
  author={Li, Jiajie and Gross, Warren J.},
  booktitle={58th Asilomar Conference on Signals, Systems, and Computers}, 
  title={A Layered {CPA} Decoder for {Reed-Muller} Codes}, 
  year={2024},
  volume={},
  number={},
  pages={985-989},
  doi={10.1109/IEEECONF60004.2024.10942623}}

@ARTICLE{Li2023improvePAlist,
  author={Li, Jiajie and Zhou, Huayi and Jalaleddine, Marwan and Gross, Warren J.},
  journal=IEEE_J_COM, 
  title={Reduced-Complexity Projection-Aggregation List Decoder for {Reed-Muller} Codes}, 
  year={2025},
  volume={73},
  number={3},
  pages={1458-1473},
  doi={10.1109/TCOMM.2024.3459851}}

@article{abbe2024polynomial,
  title={Polynomial {Freiman-Ruzsa}, {Reed-Muller} codes and {Shannon} capacity},
  author={Abbe, Emmanuel and Sandon, Colin and Shashkov, Vladyslav and Viazovska, Maryna},
  journal={arXiv preprint arXiv:2411.13493},
  year={2024}
}

@ARTICLE{hagenauer2002iterative,
  author={Hagenauer, J. and Offer, E. and Papke, L.},
  journal=IEEE_J_IT, 
  title={Iterative decoding of binary block and convolutional codes}, 
  year={1996},
  volume={42},
  number={2},
  pages={429-445},
  doi={10.1109/18.485714}}

@inproceedings{luby1998analysis,
  title={Analysis of low density codes and improved designs using irregular graphs},
  author={Luby, Michael and Mitzenmacher, Michael and Shokrollah, A and Spielman, Daniel},
  booktitle={Proceedings of the thirtieth annual ACM symposium on Theory of computing},
  pages={249--258},
  year={1998}
}

@ARTICLE{forney2002codes,
  author={Forney, G.D.},
  journal=IEEE_J_IT, 
  title={Codes on graphs: normal realizations}, 
  year={2001},
  volume={47},
  number={2},
  pages={520-548},
  doi={10.1109/18.910573}}

@ARTICLE{hartmann1976optimum,
  author={Hartmann, C. and Rudolph, L.},
  journal=IEEE_J_IT, 
  title={An optimum symbol-by-symbol decoding rule for linear codes}, 
  year={1976},
  volume={22},
  number={5},
  pages={514-517},
  doi={10.1109/TIT.1976.1055617}}

@ARTICLE{battail1979replication,
  author={Battail, G. and Decouvelaere, M. and Godlewski, P.},
  journal=IEEE_J_IT, 
  title={Replication decoding}, 
  year={1979},
  volume={25},
  number={3},
  pages={332-345},
  doi={10.1109/TIT.1979.1056035}}

@electronic{upper_bound_normal_distribution,
  title         = {Upper and lower bounds for the normal distribution function},
  url           = {https://www.johndcook.com/blog/norm-dist-bounds/},
  note          = {accessed on Dec. 28, 2025}
}

@article{fathollahi2026error,
  title={On the Error Probability of {RPA} Decoding of {Reed-Muller} Codes over {BMS} Channels},
  author={Fathollahi, Dorsa and Rameshwar, V Arvind and Lalitha, V},
  journal={arXiv preprint arXiv:2601.09581},
  year={2026}
}

@electronic{sum_dep_normal_may_not_be_normal,
  title         = {Sums of normal random variables need not be normal},
  url           = {https://planetmath.org/sumsofnormalrandomvariablesneednotbenormal},
  year          = {2013},
  note           = {accessed on Jan. 29, 2026}
}

@inproceedings{saptharishi2016efficiently,
  title={Efficiently decoding {Reed-Muller} codes from random errors},
  author={Saptharishi, Ramprasad and Shpilka, Amir and Volk, Ben Lee},
  booktitle={Proceedings of the forty-eighth annual ACM symposium on Theory of Computing},
  pages={227--235},
  year={2016}
}

@ARTICLE{saptharishi2017efficiently,
  author={Saptharishi, Ramprasad and Shpilka, Amir and Volk, Ben Lee},
  journal=IEEE_J_IT, 
  title={Efficiently Decoding {Reed–Muller} Codes From Random Errors}, 
  year={2017},
  volume={63},
  number={4},
  pages={1954-1960},
  doi={10.1109/TIT.2017.2671410}}

@INPROCEEDINGS{zhang2025minimum,
  author={Zhang, Bin and Huang, Qin},
  booktitle={IEEE Information Theory Workshop (ITW)}, 
  title={Minimum Distance Decoding for {Reed-Muller} Codes using Projection-Aggregation}, 
  year={2025},
  volume={},
  number={},
  pages={680-685},
  doi={10.1109/ITW62417.2025.11240441}}

@ARTICLE{Op_PCPA,
  author={Li, Jiajie and Gross, Warren J.},
  journal=IEEE_J_COML, 
  title={Optimization and Simplification of {PCPA} Decoder for {Reed-Muller} Codes}, 
  year={2022},
  volume={26},
  number={6},
  pages={1206-1210},
  doi={10.1109/LCOMM.2022.3163622}}

@book{roy2021series,
  title={Series and Products in the Development of Mathematics: Volume 1},
  author={Roy, Ranjan},
  year={2021},
  publisher={Cambridge University Press}
}
